\newtheorem{proposition}{Proposition}
\newcommand{\defeq}{\triangleq}
\theoremstyle{plain}
\newtheorem{theorem}{Theorem}
\newtheorem{lemma}{Lemma}
\newtheorem{remark}{Remark}
\begin{document}
	
	\title{DOA Estimation via Continuous Aperture Arrays: MUSIC and CRLB}
	
	\author{Haonan Si,~\IEEEmembership{Graduate Student Member, IEEE}, Zhaolin Wang, ~\IEEEmembership{Member, IEEE},\\	Xiansheng Guo, ~\IEEEmembership{Senior Member, IEEE}, Jin Zhang, ~\IEEEmembership{Member, IEEE}, Yuanwei Liu, ~\IEEEmembership{Fellow, IEEE}
	\thanks{Haonan Si and Xiansheng Guo are with the Department of Electronic Engineering, University of Electronic Science and Technology of China, Chengdu, 611731 China. (e-mail: sihaonan@std.uestc.edu.cn,  xsguo@uestc.edu.cn).}
	\thanks{Zhaolin Wang and Jin Zhang are with the School of Electronic Engineering and Computer Science, Queen Mary University of London, London E1 4NS, U.K. (e-mail: zhaolin.wang@qmul.ac.uk, jin.zhang@qmul.ac.uk).}
	\thanks{Yuanwei Liu is with the Department of Electrical and Electronic Engineering, The University of Hong Kong, Hong Kong. (e-mail: yuanwei@hku.hk).}
	}

	% The paper headers
	
	% Remember, if you use this you must call \IEEEpubidadjcol in the second
	% column for its text to clear the IEEEpubid mark.
	
	\maketitle

	\begin{abstract}
		Direction-of-arrival (DOA) estimation using continuous aperture array (CAPA) is studied. Compared to the conventional spatially discrete array (SPDA), CAPA significantly enhances the spatial degrees-of-freedoms (DoFs) for DOA estimation, but its infinite-dimensional continuous signals render the conventional estimation algorithm non-applicable. To address this challenge, a new multiple signal classification (MUSIC) algorithm is proposed for CAPAs. In particular, an equivalent continuous-discrete transformation is proposed to facilitate the eigendecomposition of continuous operators. Subsequently, the MUSIC spectrum is accurately approximated using the Gauss-Legendre quadrature, effectively reducing the computational complexity. Furthermore, the Cramér-Rao lower bounds (CRLBs) for DOA estimation using CAPAs are analyzed for both cases with and without priori knowledge of snapshot signals. It is theoretically proved that CAPAs significantly improve the DOA estimation accuracy compared to traditional SPDAs. Numerical results further validate this insight and demonstrate the effectiveness of the proposed MUSIC algorithm for CAPA. The proposed method achieves near-optimal estimation performance while maintaining a low computational complexity.
	\end{abstract}
	
	\begin{IEEEkeywords}
		Continuous aperture array (CAPA), Cramér-Rao lower bounds, DOA estimation, MUSIC algorithm.
	\end{IEEEkeywords}
	
	\section{Introduction}
	
	Estimating the direction-of-arrival (DOA) based on array signal processing stands out as a fundamental role in a broad range of applications including radar, acoustics, and seismology systems \cite{10934790,10858124,10945501,9810792}, which refers to the process of estimating the orientations of passive, non-cooperative targets via transmitting probing signals and processing the reflected their reflections.	Despite decades of research, it remains a persistent and challenging problem \cite{10144718,10643599}.
	
	The introduction of multi-input multi-output (MIMO) antenna arrays has significantly revolutionized sensing systems by enhancing the available uniform degrees of freedom (DoFs) \cite{9241013,7400949}. However, the spatial DoFs are still fundamentally limited by physical antenna deployment. A natural strategy to mitigate this limitation is to densely integrate more antenna elements into the constrained physical area of the transceiver. This motivation has driven the evolution from conventional MIMO to Massive MIMO and further to Gigantic MIMO architectures in next-generation wireless networks \cite{6824752,10298067}. Driven by these objectives, a variety of novel antenna array architectures have been proposed in recent years, including holographic MIMO systems \cite{10163760,9848831} and metasurface-based antennas \cite{10709869}. It is worth noting that the ultimate architecture of these arrays is envisioned as an (approximately) continuous electromagnetic (EM) aperture, which is defined as the \textit{continuous aperture array (CAPA)} \cite{liu2024capa}.
	
	\subsection{Prior Works}
	
		CAPA has emerged as a novel continuous‐aperture array architecture expected to transform 6G wireless communications by enabling fully analog beamforming across the entire bandwidth and delivering precise radiation‐pattern control \cite{liu2024capa,10938678}. Unlike conventional spatially discrete array (SPDA) systems—whose spatial degrees of freedom (DoFs) are inherently capped by the finite number of antenna elements—CAPA implements a continuous electromagnetic aperture densely populated with a vast number of  low-cost, miniaturized antennas. By realizing a continuous (or quasi-continuous) aperture, it fully harnesses every spatial DoF across its surface, thereby unlocking unprecedented channel capacity and sensing precision in a remarkably compact form factor \cite{9374451,10540217}. Based on rigorous analytical modeling, the authors in \cite{9139337} demonstrate that the spatial DoF in CAPA systems can exceed one, leading to a substantial increase in spatial capacity density. Driven by this promising advantage, extensive research efforts have been devoted to the modeling, performance analysis, and optimization of CAPA-enabled communication and sensing systems.
		
		Specifically, the authors of \cite{9906802} proposed a CAPA communication model grounded in wave propagation theory, introducing a wavenumber-division multiplexing (WDM) scheme for efficiently representing spatially continuous channels. The authors further analyzed the interference characteristics and implementation trade-offs under practical constraints. In addition, the authors of \cite{10612761} focused on uplink transmission in multi-user CAPA-assisted systems, where spectral efficiency was maximized through the optimization of continuous current density functions tailored to each user. Furthermore, the authors of \cite{10910020} investigated beamforming optimization in CAPA-based multi-user scenarios, deriving closed-form solutions for the optimal continuous source patterns and enhancing communication performance while reducing overall computational complexity.
	
		On the other hand, researchers have also investigated the fundamental performance limits of CAPA-based sensing systems. For instance, the authors in \cite{d2022cramer} analyzed these limits by deriving the Cramér-Rao lower bound (CRLB) for estimating the 3D Cartesian coordinates of target sources. Their theoretical analysis shows that CAPA can achieve centimeter-level positioning accuracy in the mmWave and sub-THz frequency bands. Furthermore, the authors of \cite{jiang2024cram} extend the theoretical framework to near-field sensing by deriving the CRLB for passive target positioning, leveraging the continuous transmit and receive array responses characteristic of CAPAs. To enhance positioning performance, the authors formulate an optimization problem that minimizes the CRLB through the design of continuous source currents, yielding significantly higher positioning accuracy compared to SPDA systems. Furthermore, the authors of \cite{10556596} investigated a more general sensing scenario by comprehensively evaluating the expected CRLB and Ziv-Zakai bound for joint target positioning and attitude estimation across diverse system configurations.

	\subsection{Motivation and Contributions}

	As aforementioned, recent research has demonstrated that CAPA systems can significantly enhance the available DoFs in sensing applications \cite{liu2024capa}, offering a promising avenue for improving DOA estimation performance. Several studies \cite{d2022cramer,jiang2024cram,10556596} have shown that CAPA systems can substantially boost estimation accuracy across various sensing tasks. Nonetheless, few algorithms have been specifically designed to fully leverage the unique characteristics of CAPA systems. For example, \cite{jiang2024cram} proposed a maximum likelihood estimation (MLE) method for CAPA-based near-field positioning, which suffers from degraded accuracy when multiple targets are present, thus limiting its practical applicability. Similarly, \cite{d2022cramer} discretized the CAPA aperture and applied MLE to evaluate estimation performance, but this approach fails to fully exploit the continuous nature and potential of CAPA systems. Hence, it is of significant meaning to design a novel high-accuracy DOA estimation method for CAPA systems.

	In conventional SPDA systems, multiple signal classification (MUSIC) is a widely adopted high-resolution DOA estimation algorithm that relies on the orthogonality between signal and noise sub-spaces \cite{6111312,9384289}. However, a fundamental challenge arises in CAPA systems, where the received signals are modeled as spatially continuous source current distributions over the aperture. Extracting target states from such infinite-dimensional measurements is inherently difficult. In particular, the MUSIC algorithm requires subspace decomposition of the data matrix, which becomes intractable in the infinite-dimensional setting of CAPA, rendering it unsuitable for CAPA-based sensing scenarios. These challenges motivate our investigation into the development of a MUSIC algorithm tailored for CAPA systems, along with an analysis of their fundamental performance limits.

	To the best of our knowledge, this is the first attempt to design a MUSIC algorithm for CAPA systems to fully exploit the spatial DoFs. Firstly, an equivalent continuous-discrete transformation is proposed to facilitate the eigendecomposition of continuous operators.  Then, the MUSIC spectrum is accurately approximated using Gauss-Legendre quadrature, effectively reducing computational complexity. Subsequently, the DOA of targets can be estimated by searching over the MUSIC spectrum. This further gives rise to an interesting and fundamental question: \textit{what performance limits for DOA estimation can CAPA achieve?} To answer this question, the CRLBs for DOA estimation using CAPAs are analyzed in both cases with or without the prior knowledge snapshot signals. It is theoretically proved that CAPAs significantly improve the DOA estimation accuracy compared to traditional SPDAs. The main contributions of this paper are summarized as follows:
	\begin{itemize}
		\item We formulate the problem of DOA estimation via CAPA within the framework of electromagnetic theory, where a receiving base station equipped with a CAPA is utilized to passively estimate the orientations of multiple targets.
	
		\item A new MUSIC algorithm is proposed for CAPA to estimate the DOAs of multiple targets. In particular, an equivalence continuous-discrete transformation is proposed to handle the intractable eigendecomposition of infinite-dimensional operators. Then, the MUSIC spectrum is approximated using Gauss-Legendre quadrature, contributing to high approximation accuracy while reducing the computational complexity.
	
		\item In practical scenarios, the snapshot signals received by the system may be either known or unknown. To comprehensively evaluate the fundamental performance of DOA estimation via CAPA, we derive the CRLBs for both cases. It is theoretically proved that CAPA significantly enhances DOA estimation accuracy compared to conventional SPDA systems.
	
		\item Extensive numerical experiments are conducted to validate the effectiveness and superiority of the proposed methods. The results show that the DOA estimation using CAPA substantially outperforms conventional SPDA systems. Moreover, the proposed MUSIC algorithm for CAPA achieves near-optimal estimation performance, with its mean squared error (MSE) closely approaching the CRLB. 
	\end{itemize}

	\subsection{Organization and Notations}
	The remainder of this paper is organized as follows: Section \ref{Sec:smpf} presents the CAPA-based DOA estimation model and problem formulation. Section \ref{sec:capade} elaborates the design of CAPA-MUSIC algorithm. The CRLB of CAPA-based DOA estimation is derived in Section \ref{sec:crlb}, encompassing both cases with known or unknown snapshot signals. Numerical results are presented in Section \ref{Sec::VNR} to demonstrate the superiority of CAPA-based DOA estimation and effectiveness of the proposed CAPA-MUSIC algorithm. Finally, Section \ref{sec:cons} concludes this paper.
	
	\textit{Notations:} Scalars, column vectors, matrices, and sets are denoted by $a$, $\mathbf{a}$, $\mathbf{A}$, and $\mathcal{A}$, respectively. $(\cdot)^\mathrm{T}$, $(\cdot)^\ast$, $(\cdot)^\mathrm{H}$, and $(\cdot)^{-1}$ denote the matrix transpose, conjugate, Hermitian, and  inverse operations, respectively. The real and imaginary parts of a complex variable $(\cdot)$ are denoted by $\Re{(\cdot)}$ and $\Im{(\cdot)}$, respectively. $\mathbb{R}^{M\times N}$ and $\mathbb{C}^{M\times N}$ mean the $M\times N$ real and complex matrix spaces, respectively. The notation $[(\cdot)]_{m,n}$ refers to the $(m,n)$-th entry of a matrix or vector. $||(\cdot)||_2$ denotes the Euclidean norm of vector $(\cdot)$. $|\mathcal{S}|$ represents the Lebesgue measure  of an Euclidean subspace $\mathcal{S}$.  The imaginary unit is represented by $\mathrm{j}=\sqrt{-1}$.

	\section{System Model and Problem Formulation}\label{Sec:smpf}

	The system model of this work is illustrated in Fig. \ref{fig:arch}. In the considered system, the base station is equipped with a CAPA to receive the signals. Without loss of generality, the CAPA is placed on the XOY plane, with the center set at the origin of the coordinate system. Its two sides are parallel to the $x$- and $y$-axes, with length along these axes being $L_x$ and $L_y$, respectively. Suppose there are a total of $M$ targets, for which the DOA are to be estimated. The location of the $m$-th target is denoted by $\mathbf{q}_m=[q_m^{(x)},q_m^{(y)},q_m^{(z)}]^{\mathrm{T}}\in\mathbb{R}^3$, where $m\in\mathcal{M}$ and $\mathcal{M}=\{1,2,...,M\}$. These targets are assumed to be located in the radiating far-field region or Fraunhofer radiation region \cite{orfanidis2002electromagnetic} of the CAPA; that is, the minimum distance from the CAPA to the target $z_0$ satisfying that $z_0> 2D^2/\lambda$, where $D$ and $\lambda$ denote the aperture size and the wavelength, respectively.
	
	\begin{figure}[t]
		\centering
		\includegraphics[width=0.7\linewidth]{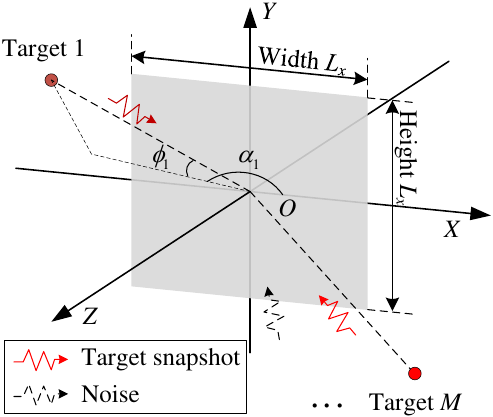}
		\caption{An illustration of the CAPA-based DOA estimation system.}
		\label{fig:arch}
	\end{figure}

	Under the configuration shown in Fig. \ref{fig:arch}, the size of CAPA is denoted by $L_x\times L_y$. A point on the CAPA is expressed as $\mathbf{r}=[r_x,r_y,r_z]\in \mathcal{S}$ where $\mathcal{S}$ represents the coordinate regions of CAPA specified by
	\begin{align}
			\label{Eq:aper}
				\mathcal{S}=&\left\{
				\mathbf{r} \defeq [r_x,r_y,r_z]^{\mathrm{T}}: -\frac{L_x}{2}\leq r_x\leq\ \frac{L_x}{2}, \right. \nonumber\\ &\quad\left.  -\frac{L_y}{2}\leq r_y\leq \frac{L_y}{2}, r_z=0
				\right\}.
	\end{align}

	\subsection{Channel Modeling}

	To characterize the mapping from a prescribed current source 	distribution \( J(\varpi, \mathbf{q}) \) to the resulting radiated electromagnetic field \( E(\varpi, \mathbf{q}) \), where \( \varpi \) denotes the angular frequency, we invoke the inhomogeneous Helmholtz equation in the temporal-frequency domain \cite{9765526}:
	\begin{equation}
		\label{eq:helmholtz_freq}
		\nabla^2 E(\varpi, \mathbf{q}) + \left( \frac{\varpi}{c} \right)^2 E(\varpi, \mathbf{z}) = \mathrm{j} \varpi \mu_0 J(\varpi, \mathbf{q}),
	\end{equation}
	where \( \mu_0 \) is the vacuum permeability and \( c \) is the speed of light in free space. This equation models the wave propagation in a linear, homogeneous, isotropic, and unbounded medium, driven by the source term \( J(\varpi, \mathbf{q}) \).

%	The system described by \eqref{eq:helmholtz_freq} is linear and space-time invariant, allowing its spectral behavior to be captured via a space-time Fourier transform. Specifically, by transforming both sides into the joint temporal-frequency and spatial-frequency (wavenumber) domain, the output spectrum is given by:
%	\begin{equation}
%		E(\varpi, \mathbf{k}) = H(\varpi, \mathbf{k}) J(\varpi, \mathbf{k}),
%	\end{equation}
%	where \( \mathbf{k} = [k_x, k_y, k_z]^\mathrm{T} \) denotes the spatial frequency vector (also referred to as the wavenumber), and \( H(\varpi, \mathbf{k}) \) is the wavenumber-frequency response of the system, defined as:
%	\begin{equation}
%		H(\varpi, \mathbf{k}) = \frac{\mathrm{j} \varpi \mu_0}{\|\mathbf{k}\|^2 - \left( \frac{\varpi}{c} \right)^2}.
%	\end{equation}

	In the context of narrowband and single-carrier systems, the frequency \( \varpi \) can be considered fixed. Under this assumption, the system becomes time-invariant, and the analysis can be restricted to its spatial behavior. The system is then fully characterized by its spatial impulse response \( h(\mathbf{q}) \), which satisfies the following inhomogeneous Helmholtz equation:
	\begin{equation}
		\label{eq:inhomogeneous_helmholtz}
		\nabla^2 h(\mathbf{q}) + k^2 h(\mathbf{q}) = \mathrm{j} k \eta_0 \delta(\mathbf{q}),
	\end{equation}
	where \( k = 2\pi / \lambda \) is the wavenumber and \( \delta(\mathbf{q}) \) denotes the three-dimensional Dirac delta function, respectively. The solution to \eqref{eq:inhomogeneous_helmholtz} is well-known and is given by:
		\begin{equation}
			h(\mathbf{q}) = -\mathrm{j} k \eta_0 G(\mathbf{q}),
		\end{equation}
		where \( G(\mathbf{q}) \) is the scalar Green's function of the Helmholtz operator in free space, expressed as:
		\begin{equation}
			G(\mathbf{q}) = \frac{e^{-\mathrm{j} k \|\mathbf{q}\|_2}}{4\pi \|\mathbf{q}\|_2}.
		\end{equation}
		Physically, this represents an outgoing spherical wave originating from a point source located at the origin, with radial propagation and amplitude decay proportional to \( 1 / \|\mathbf{q}\|_2 \).

		Then, by adding all the waves from the sources in the free space, the channel impulse response at $\mathbf{r}\in\mathcal{S}$ can be formulated as
		\begin{align}
			E(\mathbf{r})&=\int_{\mathbb{R}^3}J(\mathbf{q})h(\mathbf{r}-\mathbf{q})d\mathbf{q}\nonumber\\
			&\stackrel{(a)}{=}\sum_{m=1}^{M}J(\mathbf{q}_m)h(\mathbf{z}_m),
		\end{align}
		where equation $(a)$ is because there are a total of $M$ sources in the considered system, $\mathbf{z}_m=\mathbf{r}-\mathbf{q}_m$ denotes the receive vector for the $m$-th target with $\mathbf{r}\in\mathcal{S}$, and the receive antenna response can be formulated as follows:
	\begin{equation}
		\label{eq:grzm}
		h(\mathbf{z}_m)=\frac{\mathrm{j}k\eta_0e^{-\mathrm{j}k||	\mathbf{z}_m||_2}}{4\pi||\mathbf{z}_m||_2}.
	\end{equation}	
	This result describes the spatial propagation model between the receiver and targets.
	
	\subsection{Planar Wave Approximation}
	To facilitate the following signal modeling and algorithm design, we expand $||\mathbf{z}_m||_2$ by using the planar wave approximation. $||\mathbf{z}_m||_2$ can be formulated as:
	\begin{align}
		\label{eq:zm2}
			||\mathbf{z}_m||_2&=||\mathbf{r}-\mathbf{q}_m||_2\nonumber\\
			&=||\mathbf{q}_m||_2\sqrt{1-2\frac{\left(\mathbf{r}\cdot\bar{\mathbf{q}}_m\right)}{||\mathbf{q}_m||_2}+\frac{||\mathbf{r}||_2^2}{||\mathbf{q}_m||_2^2}}\nonumber\\
			&\stackrel{(a)}{\approx} ||\mathbf{q}_m||_2-\left(\mathbf{r}\cdot\bar{\mathbf{q}}_m\right)+\frac{\left[1+\left(\mathbf{r}\cdot\bar{\mathbf{q}}_m \right)^2\right]||\mathbf{r}||_2^2}{2||\mathbf{q}_m||_2},
	\end{align}
	where $\bar{\mathbf{q}}_m=\mathbf{q}_m/||\mathbf{q}_m||_2$ denotes the direction vector of $\mathbf{q}_m$, $\mathbf{r}\cdot\bar{\mathbf{q}}_m$ represents the inner product operation between vectors, and inequality $(a)$ is due to the fact that the Taylor series expansion $\sqrt{1+x}\approx 1+x/2-x^2/8$  is valid for small values of $x$.	By defining the wave vector  $\mathbf{d}(\alpha_m,\phi_m)=[\cos\alpha_m\cos\phi_m,\sin\alpha_m\cos\phi_m,\sin\phi_m]^{\mathrm{T}}$, one has $\mathbf{r}\cdot\bar{\mathbf{q}}_m=\mathbf{r}^{\mathrm{T}}\mathbf{d}(\alpha_m,\phi_m)$. Then, substituting Eq. (\ref{eq:zm2}) into Eq. (\ref{eq:grzm}) and neglecting the 2-order term results in:
	\begin{equation}
		h(\mathbf{z}_m)\approx  \frac{\mathrm{j}k\eta_0e^{-\mathrm{j}k||	\mathbf{q}_m||_2}}{4\pi||\mathbf{q}_m||_2} e^{\mathrm{j}k\mathbf{r}^{\mathrm{T}} \mathbf{d}({\alpha_m}, {\phi_m})},
	\end{equation}
	where $\alpha_m$ and $\phi_m$ denote the azimuth and elevation angles, respectively. 
	
	\begin{figure}[t]
		\centering
		\includegraphics[width=0.5\linewidth]{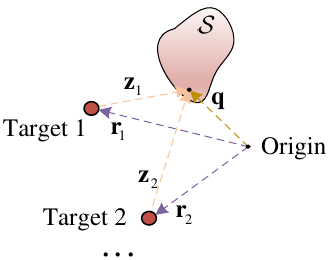}
		\caption{An explanation of geometry of targets.}
		\label{fig:geo}
	\end{figure}

	Then, due to the presence of noise, the excited electric filed at time $t$ at the point $\mathbf{r}\in \mathcal{S}$ can be written as follows:
	\begin{align}
		\label{eq:receive_approx}
			E(\mathbf{r},t)=& \sum_{m=1}^{M}J(\mathbf{q}_m,t)h(\mathbf{z}_m)+n(\mathbf{r},t)\nonumber\\
			=&\sum_{m=1}^{M} s_m(t) e^{\mathrm{j} k \mathbf{r}^{\mathrm{T}} \mathbf{d}({\alpha_m}, {\phi_m})}+n(\mathbf{r},t),
	\end{align}
	where $s_m(t)$ denotes the snapshot signal of the $m$-th target, formulated as
	\begin{equation}
		\label{eq:sm}
		s_m(t)=J(\mathbf{q}_m,t)\frac{\mathrm{j}k\eta_0e^{-\mathrm{j}k||	\mathbf{q}_m||_2}}{4\pi||\mathbf{q}_m||_2}.
	\end{equation}	
	The noise term $n(\mathbf{r},t)$ is assumed to  be a spatially uncorrelated zero-mean complex Gaussian
	process as described in \cite{jensen2008capacity}. The correlation function of this process is formulated as
	\begin{align}
			&\mathbb{E}\left\{n(\mathbf{r}_1,t),n^\ast(\mathbf{r}_2,t)\right\}=\sigma^2\delta(\mathbf{r}_1-\mathbf{r}_2),\nonumber\\
			 &\forall \mathbf{r}_1, \mathbf{r}_2\in \mathcal{S},
	\end{align}
	where $\sigma^2$ denotes the noise covariance and $\delta(\cdot)$ is the Dirac function, indicating that the noise is uncorrelated across the CAPA aperture. Notably, our work can be extended to the situation with spatially colored noise, further details can be found in \cite{10612761}. Defining that $\mathbf{s}(t)=[s_1(t),s_2(t),...,s_M(t)]^{\mathrm{T}}$, the DOA parameters $\boldsymbol{\alpha}=\left[\alpha_1,\alpha_2,...,\alpha_M\right]^\mathrm{T}$ and $\boldsymbol{\phi}=\left[\phi_1,\phi_2,...,\phi_M\right]^\mathrm{T}$.  Subsequently, Eq. (\ref{eq:receive_approx}) can be written in a compact form:
	 \begin{equation}
	 	E(\mathbf{r},t)={\mathbf{a}}^{\mathrm{T}}(\mathbf{r};\boldsymbol{\alpha},\boldsymbol{\phi}) \mathbf{s}(t)+n(\mathbf{r},t),
	 \end{equation} 
	 where $\mathbf{a}(\mathbf{r};\boldsymbol{\alpha},\boldsymbol{\phi})$ is formulated as
	 \begin{align}
	 		&\mathbf{a}(\mathbf{r};\boldsymbol{\alpha},\boldsymbol{\phi})=\left[a(\mathbf{r};\alpha_1,\phi_1),a(\mathbf{r};\alpha_2,\phi_2),...,a(\mathbf{r};\alpha_M,\phi_M)\right]^\mathrm{T},\\
	 		&a(\mathbf{r};\alpha_m,\phi_m)=e^{\mathrm{j} k \mathbf{r}^{\mathrm{T}} \mathbf{d}({\alpha_m}, {\phi_m})}, m\in\mathcal{M}.
	 \end{align}

	To facilitate the subsequent algorithm design and theoretical analysis, the continuous aperture of the CAPA receiver is equivalently transformed into an infinite number of infinitesimal non-overlapping receiving units, i.e., $\mathcal{S} = \bigcup_{n=1}^{N} \mathcal{S}_n$, where $N \to \infty$. The voltage observed at the output of the $n$-th element is obtained by integrating over the region as
	\begin{align}
		\label{eq:xndefi}
			x_n(t)=&\int_{\mathcal{S}_n}E(\mathbf{r},t)d\mathbf{r}\nonumber\\
			\approx&|\mathcal{S}_n|\left[\mathbf{a}^{\mathrm{T}}(\mathbf{r}_n;\boldsymbol{\alpha},\boldsymbol{\phi}) \mathbf{s}(t)+n(\mathbf{r}_n,t)\right]\nonumber\\
			=&|\mathcal{S}_n|\mathbf{a}^{\mathrm{T}}(\mathbf{r}_n;\boldsymbol{\alpha},\boldsymbol{\phi}) \mathbf{s}(t)+\nu(\mathbf{r}_n,t),
	\end{align}
	where $n\in\mathcal{N}$,  $\mathcal{N}=\{1,2,...,N\}$, $\mathbf{r}_n\in\mathcal{S}_n$ is an arbitrary point within the $n$-th region, $|\mathcal{S}_n|$ denotes the area of region $\mathcal{S}_n$, and $\nu(\mathbf{r}_n,t)$ is independent zero-mean Gaussian random variables with variance $\sigma_{\nu}^2=|\mathcal{S}_n|\sigma^2$. Without loss of generality, it is assumed that the area of each region is uniform and equals to $Z$, i.e., $|\mathcal{S}_n|=Z$ for $\forall n\in\mathcal{N}$. By concatenating the received signals, steering vectors, and the noises across $n\in\mathcal{N}$, we derive the following vectors or matrices with infinite dimensions (keep in mind that $N\rightarrow \infty$):
		\begin{align}
			&\mathbf{x}(t) = \left[x(\mathbf{r}_1,t),x(\mathbf{r}_2,t),...,x(\mathbf{r}_N,t)\right]^{\mathrm{T}}\in \mathbb{C}^{N},\\ 
			&\mathbf{A}(\boldsymbol{\alpha},\boldsymbol{\phi}) = \left[\mathbf{a}(\mathbf{r}_1;\boldsymbol{\alpha},\boldsymbol{\phi}),\mathbf{a}(\mathbf{r}_2;\boldsymbol{\alpha},\boldsymbol{\phi}),...,\mathbf{a}(\mathbf{r}_N;\boldsymbol{\alpha},\boldsymbol{\phi})\right]^{\mathrm{T}}\nonumber\\
			&\quad\quad\quad\quad\in \mathbb{C}^{M\times N},\\
			&\mathbf{n}(t) = \left[ n(\mathbf{r}_1,t),n(\mathbf{r}_2,t),...,n(\mathbf{r}_N,t)\right]^{\mathrm{T}}\in \mathbb{C}^{N}.
		\end{align}
	Then, the received signal can be rewritten in the following compact form:
	\begin{equation}
		\mathbf{x}(t)=\lim\limits_{Z\rightarrow 0}Z\left[\mathbf{A}(\boldsymbol{\alpha},\boldsymbol{\phi})\mathbf{s}(t)+\mathbf{n}(t)\right],
	\end{equation}
	where $Z=\lim\limits_{N\rightarrow +\infty}|\mathcal{S}_n|\rightarrow 0$, due to the fact that CAPA has theoretically infinite array units.
	
	In the considered CAPA-based DOA estimation scenario, the target transmits probing signals with source current function $J(\mathbf{q}_m)$ and the CAPA receives the signal $x(\mathbf{r})$ over region $\mathcal{S}$. Hence, the aim of this paper is to extract the targets' orientation information $\{\boldsymbol{\alpha}, \boldsymbol{\phi}\}$ from the received signal ${x}(\mathbf{r},t)$ over the entire CAPA. Since the azimuth and elevation angles are of interest, we mainly focus on the estimation accuracy.
	
	\begin{remark}\normalfont
		The goal of DOA estimation is to determine the angles $\{\boldsymbol{\alpha},\boldsymbol{\phi}\}$ from the measurements $\mathbf{x}(1), \mathbf{x}(2), \dots, \mathbf{x}(T)$ at $T$ time instances. Traditionally, the MUSIC algorithm is employed for high-resolution DOA estimation by exploiting the orthogonality of the noise subspace. However, in the context of the CAPA system under consideration,  the measurements are of infinite dimension, i.e., $N\rightarrow \infty$, making traditional methods no longer applicable. This motivates us to design a modified MUSIC algorithm tailored for CAPA systems.
	\end{remark}

	\section{MUSIC Algorithm Design for CAPAs}\label{sec:capade}
	
	In this section, we aim to design a MUSIC algorithm tailored for the CAPA systems, which is denoted by CAPA-MUSIC algorithm. The proposed CAPA-MUSIC method is a subspace-based DOA-estimation method for multiple targets with unknown signal snapshots. 
	
	\subsection{Subspace Decomposition for Spatially-Continuous Signal}
	Assuming that the sources are non-coherent, the noise $\mathbf{n}\sim \mathcal{CN}(0,\sigma^2\mathbf{I}_N)$ is an additive white Gaussian noise, $\mathbf{I}_N$  is an identity matrix with $N$ dimension, and $\mathbf{n}$ is independent with the signals, the covariance matrix of $\mathbf{x}(t)$ can be formulated as	
	\begin{align}\label{Eq::RX1}
			\mathbf{R}_{\mathbf{x}}=&\lim\limits_{Z \rightarrow 0}\mathbb{E}\left[\mathbf{x}\mathbf{x}^{\mathrm{H}}\right]\nonumber\\
			=&\lim\limits_{Z \rightarrow 0}\mathbf{A}\mathbf{R}_{\mathbf{s}} \mathbf{A}^{\mathrm{H}}+\sigma^2\mathbf{I}_{N},
	\end{align}
	where $\mathbf{R}_{\mathbf{x}}\in \mathbb{R}^{N\times N}$  and $\mathbf{R}_{\mathbf{s}}\in \mathbb{R}^{M\times M}$ denote the covariance matrices of the received signal $\mathbf{x}$ and snapshot signal $\mathbf{s}$, respectively.
	Theoretically, $\mathbf{R}_{\mathbf{x}}$ can be reformulated by eigenvalue decomposition:
	\begin{equation}\label{Eq::RX2}
		{\mathbf{R}}_{\mathbf{x}}=\lim\limits_{Z \rightarrow 0}{\mathbf{U}}{\mathbf{\Lambda}}{\mathbf{U}}^{\mathrm{H}},
	\end{equation}
	where ${\mathbf{U}}\in\mathbb{R}^{N\times N}$ is a unitary matrix satisfying ${\mathbf{U}}{\mathbf{U}}^{\mathrm{H}}={\mathbf{I}}_N$ and $\mathbf{\Lambda}\in\mathbb{R}^{N\times N}$ is a diagonal matrix. Then, by combining Eqs. (\ref{Eq::RX1}) and (\ref{Eq::RX2}), the following equation holds:
	\begin{align}
		\label{eq:eqei}
			\mathbf{A}\mathbf{R}_{\mathbf{s}} \mathbf{A}^{\mathrm{H}}=&{\mathbf{U}}{\mathbf{\Lambda}}{\mathbf{U}}^{\mathrm{H}}-\sigma^2{\mathbf{U}}{\mathbf{U}}^{\mathrm{H}}\nonumber\\
			=&{\mathbf{U}}({\mathbf{\Lambda}}-\sigma^2{\mathbf{I}}_N){\mathbf{U}}^{\mathrm{H}}.
	\end{align}
	
	Notably, the ranks on both sides of Eq. (\ref{eq:eqei}) should be equal \cite{9384289}. The rank of left side of Eq. (\ref{eq:eqei}) is equal to the dimension of $\mathbf{R}_\mathbf{s}$, i.e., the number targets $M$, while the rank of $\mathbf{U}$ is $N$. This observation indicates that the entries of matrix ${\mathbf{\Lambda}}-\sigma^2{\mathbf{I}}_N$ should be zero except the first $M\times M$ entries. Define that ${\mathbf{U}}=
	\left[	
	\begin{array}{l:l}
		\mathbf{U}_{11} &{{\mathbf{U}}_{12}} \\ \hdashline
		{{\mathbf{U}}_{21}} & {\mathbf{U}}_{22}
	\end{array}
	\right]
	$, with $\mathbf{U}_{11}\in \mathbb{R}^{M\times M}$, $\mathbf{U}_{11}\in \mathbb{R}^{M\times M}$, $\mathbf{U}_{12}\in \mathbb{R}^{M\times (N-M)}$, $\mathbf{U}_{21}\in \mathbb{R}^{(N-M)\times M}$, and $\mathbf{U}_{22}\in  \mathbb{R}^{(N-M)\times (N-M)}$ being the sub-matrices of ${\mathbf{U}}$. Let $\mathbf{U}_{\mathcal{S}}=[{\mathbf{U}}_{11};\mathbf{U}_{21}]$ and $\mathbf{U}_{\mathcal{N}}=[{\mathbf{U}}_{12};\mathbf{U}_{22}]$ denote the eigenvectors of signal and noise sub-spaces, respectively. Based on the observation above, it can be concluded that 
	\begin{equation}
		\text{span}(\mathbf{A})=\text{span}(\mathbf{U}_{\mathcal{S}}),
	\end{equation}
	where $\text{span}(\cdot)$ denotes the space spanned by the column vectors of matrix $(\cdot)$. Due to the fact that signal sub-space and the noise sub-space are orthogonal, i.e., $\text{span}(\mathbf{U}_{\mathcal{S}})\perp \text{span}(\mathbf{U}_{\mathcal{N}})$, it can be further derived that:	
	\begin{equation}\label{eq:bsth}
		\text{span}(\mathbf{A})\perp\text{span}(\mathbf{U}_{\mathcal{N}})\Rightarrow{\mathbf{U}}_{\mathcal{N}}^{\mathrm{H}}\mathbf{A}=\mathbf{0},
	\end{equation}
	which is the basic principle of MUSIC algorithm.
	
	In practical applications, we define the estimated covariance matrix for $\mathbf{X}$:
	\begin{align}
			\hat{\mathbf{R}}_{\mathbf{x}}=&\lim\limits_{Z \rightarrow 0}\frac{1}{T}\sum_{t=1}^{T}\mathbf{x}(t)\mathbf{x}^{\mathrm{H}}(t)\nonumber\\=&\lim\limits_{Z \rightarrow 0}\frac{1}{T}\mathbf{X}\mathbf{X}^{\mathrm{H}},
	\end{align}
	where $\mathbf{X}=[\mathbf{x}(1),\mathbf{x}(2),...,\mathbf{x}(T)]\in\mathbb{R}^{N\times T}$. Notably, it should be guaranteed that $T\geq M$ because at least $M$ snapshots are required to derive a rank $M$ approximation of $\mathbf{R}_\mathbf{x}$ \cite{9384289}.

	In traditional MUSIC algorithms \cite{10144718}, $\mathbf{U}_{\mathcal{N}}$ can be obtained by executing eigenvalue decomposition on $\hat{\mathbf{R}}_{\mathbf{x}}$. However, in the considered CAPA-based DOA estimation system, $N$ is an infinite constant and  $\hat{\mathbf{R}}_{\mathbf{x}}$ has infinite eigenvalues, rendering the calculation of $\mathbf{U}_{\mathcal{N}}$ challenging. To solve this challenge, we propose a novel eigenvalue decomposition method for infinite matrices in the sequel.
	
	\subsection{Eigendecomposition for Spatially-Continuous Signal}
	
	Based on the definition of eigenvalues and eigenvectors,  we have $\hat{\mathbf{R}}_{\mathbf{x}}\mathbf{u}=\kappa\mathbf{u}$, where $\mathbf{u}$ and $\kappa$ denote the eigenvector and the corresponding eigenvalue of matrix $\hat{\mathbf{R}}_{\mathbf{x}}$, respectively.  Multiplying on both side with $\mathbf{X}^{\mathrm{H}}$ results in:	
	\begin{equation}\label{eig:eq}
		\frac{1}{T}\lim\limits_{Z \rightarrow 0}\mathbf{X}^{\mathrm{H}}\mathbf{X}\mathbf{X}^{\mathrm{H}}{\mathbf{u}}=\lim\limits_{Z \rightarrow 0}\lambda\mathbf{X}^{\mathrm{H}}{\mathbf{u}},
	\end{equation}

	Defining $\mathbf{K}=\lim\limits_{Z \rightarrow 0}\mathbf{X}^{\mathrm{H}}\mathbf{X}/T\ \in \mathbb{R}^{T\times T}$ and $\mathbf{e}=\lim\limits_{Z \rightarrow 0}\mathbf{X}^{\mathrm{H}}{\mathbf{u}}\in \mathbb{R}^{T\times 1}$, Eq. (\ref{eig:eq}) can be rewritten as
	\begin{equation}\label{eigEQ}
		\mathbf{K}\mathbf{e}=\lambda\mathbf{e}.
	\end{equation}

	\begin{remark}\normalfont
		\label{rem:rem1}
		Notably, matrices $\hat{\mathbf{R}}_{\mathbf{x}}$ and $\mathbf{K}$ share $T$ common eigenvalues. Thus, the eigenvalue computation for $\hat{\mathbf{R}}_{\mathbf{x}}$ equivalently carried out by calculating those of the finite matrix $\mathbf{K}$. It is worth pointing that the operation in Eq. (\ref{eig:eq}) inevitably involves some information loss, as the number of eigenvalues is reduced from infinity to $T$. However, since the rank of $\hat{\mathbf{R}}_{\mathbf{x}}$ does not exceed $T$, the remaining eigenvectors can be expressed as linear combinations of the first $T$-th eigenvectors. Thus, this transformation is both effective and crucial to the proposed eigenvalue decomposition method, which will be further demonstrated by numerical results in Section \ref{Sec::VNR}.
	\end{remark}	
	
	Specifically, the elements of $\mathbf{K}$ can be calculated as
	\begin{align}\label{element_K}
			[\mathbf{K}]_{ij}=&\lim\limits_{Z \rightarrow 0}
			\frac{1}{T} \sum_{n=1}^{N} x^\ast(\mathbf{r}_n,i)x(\mathbf{r}_n,j)\nonumber\\
			=&\frac{1}{T}\lim_{Z\rightarrow 0}Z^2 \sum_{n=1}^{N} E^\ast(\mathbf{r}_n,i)E(\mathbf{r}_n,j)\nonumber\\
			=&\frac{1}{T} \int_{-\frac{L_y}{2}}^{\frac{L_y}{2}} \int_{-\frac{L_x}{2}}^{\frac{L_x}{2}} E^\ast(\mathbf{r},i)E(\mathbf{r},j)dr_xdr_y.
	\end{align}

	The integral of a function $f(x,y)$ across a surface can be computed by using the second order Gauss-Legendre quadrature, which takes the following form \cite{ralston2001first}:
	\begin{equation}\label{Gauss_l}
		\int_{-1}^1\int_{-1}^1 f(x,y) dxdy \approx \sum_{k_x=1}^K\sum_{k_x=1}^K \omega_{k_x}\omega_{k_y} f\left(\vartheta_{k_x},\vartheta_{k_y}\right),
	\end{equation}
	where $K$ is the dimension of Gauss-Legendre quadrature representing the number of sample points. A larger value of $K$ results in higher approximation accuracy. $\vartheta_{k_x}$ and $\vartheta_{k_y}$  are the roots of Gauss-Legendre polynomial. $\omega_{k_x}$ and $\omega_{k_y}$ denote the weight of Gauss-Legendre polynomial, which can be formulated as follows:
	\begin{align}
			&\omega_{k_x}=\frac{2}{(1-\vartheta_{k_x}^2)[P_K'(\vartheta_{k_x})]^2},k_x\in\mathcal{K},\label{eq:glq0}\\
			&\omega_{k_y}=\frac{2}{(1-\vartheta_{k_y}^2)[P_K'(\vartheta_{k_y})]^2}, k_y\in\mathcal{K},
		\label{eq:glq}
	\end{align}
	where $\mathcal{K}=\left\{1,2,...,K\right\}$, $P_K'(\vartheta_{k_x})$ and $P_K'(\vartheta_{k_y})$ denotes the the first-order differentials of the $K$-th Gauss-Legendre polynomial evaluated at $\vartheta_{k_x}$ and $\vartheta_{k_y}$, respectively. This formula ensures that the quadrature rule achieves high accuracy for polynomials. Based on Gauss-Legendre quadrature, Eq. (\ref{element_K})  can be rewritten as follows:
	\begin{align}\label{simK}
			[\mathbf{K}]_{ij}
			=&\frac{1}{T} \int_{-\frac{L_y}{2}}^{\frac{L_y}{2}} \int_{-\frac{L_x}{2}}^{\frac{L_x}{2}} E^\ast(\mathbf{r},i)E(\mathbf{r},j)dr_xdr_y\nonumber\\
			\approx&\frac{L_xL_y}{4T} \sum_{k_x=1}^{K}\sum_{k_y=1}^{K} \omega_{k_x}\omega_{k_y} E^\ast\left(\mathbf{r}_{k_x,k_y},i\right) \nonumber\\
			&\quad\times E\left(\mathbf{r}_{k_x,k_y},j\right),
	\end{align}
	where $\mathbf{r}_{k_x,k_y}=\left[\vartheta_{k_x}L_x/2,\vartheta_{k_y}L_y/2,0\right]^{\mathrm{T}}$ denotes the sample points, $\omega_{k_x}$ and $\omega_{k_y}$ represent the weights  of the $K$-th Gauss-Legendre polynomial.
	
%	Let  $\boldsymbol{\Omega}=\text{diag}\left\{\omega_1,\omega_2,...,\omega_K\right\}$ denote the weight matrix. Then, we have $\mathbf{K}= \mathbf{\bar{X}}^{\mathrm{H}}\boldsymbol{\Omega}\mathbf{\bar{X}}/N$, where  $\mathbf{\bar{X}}=[\mathbf{\bar{x}}(1),\mathbf{\bar{x}}(2),...,\mathbf{\bar{x}}(N)]$ and $\mathbf{\bar{x}}(i)=\left[x(\mathbf{r}_1,i),x(\mathbf{r}_2,i),...x(\mathbf{r}_K,i)\right]^{\mathrm{T}}$, respectively.
	
	Subsequently, we can calculate the eigenvectors of matrix $\mathbf{K}$, denoted by $\mathbf{e}_1,\mathbf{e}_2,...,\mathbf{e}_T$, which are sorted in descending order according to their corresponding eigenvalues. Similarly, let $\mathbf{u}_{i}$ denote the $i$-th eigenvector of $\mathbf{X}\mathbf{X}^{\mathrm{H}}$. As analyzed in \textbf{Remark \ref{rem:rem1}}, we only consider the $T$ eigenvectors with the highest eigenvalues. Theoretically, $\mathbf{u}_{i}\in\mathbb{R}^N$ and each dimension corresponds to an array element in CAPA, i.e., $\mathbf{u}_{i} = \left[u_{i}(\mathbf{r}_1),u_{i}(\mathbf{r}_2),...,u_i(\mathbf{r}_N)\right]\in \mathbb{C}^{N}$, where $i\in\mathcal{T}$ and $\mathcal{T}=\{1,2,...,T\}$. Similar to Eqs. (\ref{element_K})-(\ref{simK}), the $t$-th element of the $i$-th eigenvector $[\mathbf{e}_i]_t$ can be approximated by
	\begin{align}\label{simE}
		[\mathbf{e}_i]_t=&\lim\limits_{Z \rightarrow 0}\sum_{n=1}^{N}x^\ast(\mathbf{r}_n,i)u_i(\mathbf{r}_n)\nonumber\\
		=&\int_{\mathcal{S}}E^\ast(\mathbf{r},k)u_i(\mathbf{r})d\mathbf{r}\nonumber\\
		=& \int_{-\frac{L_y}{2}}^{\frac{L_y}{2}} \int_{-\frac{L_x}{2}}^{\frac{L_x}{2}} E^\ast(\mathbf{r},t)u_i(\mathbf{r})dr_xdr_y\nonumber\\
		\approx&\frac{L_xL_y}{4} \sum_{k_x=1}^{K}\sum_{k_x=1}^{K} \omega_{k_x}\omega_{k_y} u_i\left(k_x,k_y\right) \nonumber\\
		&\times E^\ast\left(\mathbf{r}_{k_x,k_y},t\right)= \mathbf{E}^{\mathrm{H}}(t)\boldsymbol{\Omega} \bar{\mathbf{u}}_j,
	\end{align}
	where $\boldsymbol{\Omega}=\frac{L_xL_y}{4}\boldsymbol{\omega}\otimes\boldsymbol{\omega}$, $\otimes$ represents the Kronecker product, and 
	\begin{align}\label{elesup}
			&\boldsymbol{\omega}=\text{diag}\left\{\omega_1,\omega_2,...,\omega_K\right\}\in\mathbb{R}^{K^2\times K^2},\\
			\label{elesup1}&\mathbf{E}(t)=\left[ E(\mathbf{r}_{1,1},t),E(\mathbf{r}_{1,2},t),...,E(\mathbf{r}_{K,K},t)\right]^{\mathrm{T}}\in\mathbb{R}^{K^2},\\
			&\mathbf{\bar{u}}_i=\left[u_i(1,1),u_i(1,2),...,u_i(K,K)\right]^{\mathrm{T}}\in\mathbb{R}^{K^2}.
	\end{align}
	By defining $\bar{\mathbf{E}}=[\mathbf{E}(1),\mathbf{E}(2),...,\mathbf{E}(T)]$, we have $\mathbf{e}_i\approx\bar{\mathbf{E}}^{\mathrm{H}}\boldsymbol{\Omega}\bar{\mathbf{u}}_i$, such that the $i$-th eigenvector of $\hat{\mathbf{R}}_{\mathbf{x}}$ can be approximated by:
	\begin{equation}
		\label{eq:ukbar}
		\mathbf{\bar{u}}_i\approx\left(\mathbf{\bar{E}}^{\mathrm{H}}\boldsymbol{\Omega}\right)^\dag\mathbf{e}_i, i\in\mathcal{T},
	\end{equation}
	where $(\mathbf{E}^{\mathrm{H}}\boldsymbol{\Omega})^{\dag}$ denotes the Moore-Penrose pseudo inverse. Here, we have finished the eigenvalue decomposition for infinite matrix $\hat{\mathbf{R}}_\mathbf{x}$, based on equivalent transformation and pseudo inverse operations.
	
%	\begin{remark}
%		Adhere, we utilize the Legendre polynomial and Moore-Penrose pseudo-inverse to derive the eigenvectors for an infinite matrix $\hat{\mathbf{R}}_{\mathbf{x},\infty}$. 
%	\end{remark}
	
	\subsection{CAPA-MUSIC spectrum}
	
	Then, the eigenvectors of the noise subspace are formulated as
	\begin{equation}
		\label{eq:unbar}
		\bar{\mathbf{U}}_{\mathcal{N}}=[\mathbf{\bar{u}}_{M+1},\mathbf{\bar{u}}_{M+2},...,\mathbf{\bar{u}}_T].
	\end{equation}
	Finally, based on Eq. (\ref{eq:bsth}), the DOA parameters $\{\alpha,\phi\}$ is estimated by identifying the peaks in the MUSIC spectrum for CAPA:
	\begin{align}
		\label{music_spe}
			\{\hat{\alpha},\hat{\phi}\}=&{}^M\arg\max_{\alpha,\phi}P_{\text{MUSIC}}(\alpha,\phi)\nonumber\\
			=&{}^M\arg\max_{\alpha,\phi}\lim_{Z\rightarrow 0}\frac{1}{\boldsymbol{a}^{\mathrm{H}}(\alpha,\phi)\mathbf{U}_{\mathcal{N}}\mathbf{U}^{\mathrm{H}}_{\mathcal{N}}\boldsymbol{a}(\alpha,\phi)}\nonumber\\
			\stackrel{(a)}{\approx}&{}^M\arg\max_{\alpha,\phi}\frac{1}{\boldsymbol{\bar{a}}^{\mathrm{H}}(\alpha,\phi)\boldsymbol{\Omega}\mathbf{\bar{U}}_{\mathcal{N}}\mathbf{\bar{U}}_{\mathcal{N}}^{\mathrm{H}}\boldsymbol{\Omega}\boldsymbol{\bar{a}}(\alpha,\phi)},
	\end{align}
	where ${}^M\arg\max f(\cdot)$ denotes the set of $M$ values at which the function $f(\cdot)$ reaches its local maxima, and
	\begin{align}
			&\boldsymbol{{a}}(\alpha,\phi)\nonumber\\
			= &\left[ {a}(\mathbf{r}_{1};\alpha,\phi),{a}(\mathbf{r}_{2};\alpha,\phi),...,{a}(\mathbf{r}_{N};\alpha,\phi)\right]^{\mathrm{T}}\in\mathbb{R}^{N},\\
			&\boldsymbol{\bar{a}}(\alpha,\phi)\nonumber\\
			=&\left[ {a}(\mathbf{r}_{1,1};\alpha,\phi),{a}(\mathbf{r}_{1,2};\alpha,\phi),...,{a}(\mathbf{r}_{K,K};\alpha,\phi)\right]^{\mathrm{T}}\in\mathbb{R}^{K^2}.
	\end{align}
	 Approximation $(a)$ is based on Gauss-Legendre quadrature, similar to the operations in Eqs. (\ref{simK}) and (\ref{simE}).  This transfers the computation involving infinite-dimensional vectors into the one in a finite-dimensional space.

	Algorithm \ref{alg:CAPA-MUSIC} summarized the CAPA-MUSIC algorithm.

	\begin{algorithm}[t]
		\caption{CAPA-MUSIC Algorithm}
		\begin{algorithmic}[1]\label{alg:CAPA-MUSIC}
			\REQUIRE CAPA aperture parameters (area $L_x\times L_y$ and region $\mathcal{S}$ defined by its boundaries);
			Number of snapshots $T$ and the received signal $x(\mathbf{r},t)$;
			Dimension of Legendre polynomial $K$.

			\ENSURE Estimated target azimuth and elevation angles $\{\hat{\alpha}_m,\hat{\phi}_m\}_{m=1}^M$.
			\STATE Compute the roots $\vartheta_k$ and weights $\omega_k$ for $K$-th Legendre polynomial using Eqs. \eqref{eq:glq0} and (\ref{eq:glq}).
			\STATE Assemble the elements of $\mathbf{K}$ by using Eq. (\ref{simK}).
			\STATE Compute eigenvectors $\mathbf{e}_1,\mathbf{e}_2,...,\mathbf{e}_T$ using eigen-decomposition.
			\STATE  Compute $\boldsymbol{\Omega}$ and ${\mathbf{E}}(t)$ using Eq. (\ref{elesup}) and \eqref{elesup1}.
			\STATE Compute $\bar{\mathbf{U}}_\mathcal{N}$  using Eqs. (\ref{eq:ukbar}) and (\ref{eq:unbar}).
			\STATE Compute the MUSIC spectrum using Eq. (\ref{music_spe}).
			\STATE Search on the MUSIC spectrum to find the optimal $\hat{\alpha},\hat{\phi}$.
			\RETURN Estimated target azimuth and elevation angles $\{\hat{\alpha}_m,\hat{\phi}_m\}_{m=1}^M$.
		\end{algorithmic}
	\end{algorithm}
	
	\subsection{Complexity Analysis}
	The computational complexity of the proposed CAPA-MUSIC algorithm is analyzed as follows. Firstly, the calculation of $\vartheta_{k}$ and $\omega_k$ in Step 1 of Algorithm \ref{alg:CAPA-MUSIC} has complexity $\mathcal{O}(K^2)$. The eigen-decomposition in Step 3 takes up complexity of $\mathcal{O}(T^3)$. Deriving the eigenvectors of noise sub-space in steps 4-5 has complexity $\mathcal{O}(K^4T)$. For each scan grid, the calculation of $P_{\text{MUSIC}}$ in step 6 consumes complexity of $\mathcal{O}(K^4)$. Let $N_S$ denote the number of scan grids, the total computational  complexity of CAPA-MUSIC is $\mathcal{O}(T^3+K^4T+N_S K^4)$.
	
	Typically, a larger number of scan grids contribute to achieve high angular resolution, while resulting in heavier complexity. To alleviate this challenge, an alternative method is to perform a low-resolution coarse grid scan with a small $N_\theta$, and use a gradient-based optimizer to estimate the DOA with high resolution \cite{9384289}.

	\section{Cramér-Rao Lower Bound}\label{sec:crlb}
	In this section, we evaluate the fundamental DOA estimation performance of CAPA and quantify its performance gain relative to traditional SPDA-based methods. Specifically, we consider two scenarios corresponding to different real-world applications: one where the snapshot signals $\left\{\mathbf{s}(t)\right\}_{t=1}^{T}$ are known, and another where $\left\{\mathbf{s}(t)\right\}_{t=1}^{T}$ are unknown.
	
	Firstly, based on Eq. (\ref{eq:xndefi}), the probability density of $\mathbf{x}(1),\mathbf{x}(2),...,\mathbf{x}(T)$ is formulated as
	\begin{align}\label{proden}
			&p\left(
			\mathbf{x}(1),\mathbf{x}(2),...,\mathbf{x}(T)\bigg| \boldsymbol{\alpha},\boldsymbol{\phi},\left\{\mathbf{s}(t)\right\}_{t=1}^{T}
			\right)\nonumber\\
			=&\left|\pi\sigma_{\nu}^2{\mathbf{I}}_N\right|^{-T}\exp \left\{
			-\sigma_{\nu}^{-2} \sum_{t=1}^{T}\left\| \mathbf{x}(t)-Z\mathbf{A}(\boldsymbol{\alpha},\boldsymbol{\phi})\mathbf{s}(t)\right\|_2^2\right\}.
	\end{align}	
	To derive the CRLB for DOA estimation, consider the likelihood function of $\mathbf{x}(1),\mathbf{x}(2),...,\mathbf{x}(T)$ given $\{\boldsymbol{\alpha},\boldsymbol{\phi}\}$ and $\left\{\mathbf{s}(t)\right\}_{t=1}^T$, which is given by
	\begin{align}\label{eq:ll}
			&\mathcal{L}\left(\boldsymbol{\alpha},\boldsymbol{\phi},\left\{\mathbf{s}(t)\right\}_{t=1}^T\right)\nonumber\\=&\lim\limits_{Z \rightarrow 0} -\frac{1}{\sigma_{\nu}^2} \sum_{t=1}^{T}\sum_{n=1}^{N}\left[{x}(\mathbf{r}_n,t)- Z{\mathbf{a}}^{\mathrm{H}}(\mathbf{r}_n;\boldsymbol{\alpha},\boldsymbol{\phi}) \mathbf{s}(t)\right]^2	\nonumber\\
			&\quad\quad -T\log(\left|\pi\sigma_{\nu}^2{\mathbf{I}}_N\right|) \nonumber\\
			=&- \frac{1}{\sigma^2_{\nu}}\sum_{t=1}^{T}\int_{\mathcal{S}}
			\left[ {E}(\mathbf{r},t)-\mathbf{a}^{\mathrm{H}}(\mathbf{r};\boldsymbol{\alpha},\boldsymbol{\phi})\mathbf{s}(t) \right]^2\,d\mathbf{r}+C,
	\end{align}
	where $C$ is a constant irrelevant to $\boldsymbol{\theta}$ and $\left\{\mathbf{s}(t)\right\}_{t=1}^T$.
	
	\subsection{The Case With Known Target Snapshots}
	In this case, the parameters to be estimated is $\boldsymbol{\theta}=\{\boldsymbol{\alpha},\boldsymbol{\phi}\}$ and  we only focus on the likelihood function with regard to $\boldsymbol{\theta}$. To obtain the closed-loop form of CRLB,  we define  $f(\mathbf{r};\boldsymbol{\theta})=\left[ {x}(\mathbf{r},t)-\mathbf{a}^{\mathrm{H}}(\mathbf{r};\boldsymbol{\alpha},\boldsymbol{\phi})\mathbf{s}(t) \right]^2$ and introduce the following lemma:
	
	\begin{lemma}\normalfont
		\label{lemma:1}
		$f(\mathbf{r};\boldsymbol{\theta})$ satisfies the regularity condition, i.e., the integration and differentiation can be exchanged:
		\begin{equation}
			\label{regcon}
			\frac{\partial}{\partial [\boldsymbol{\theta}]_i} \int_{\mathcal{S}} f(\mathbf{r};\boldsymbol{\theta}) d \mathbf{r}=\int_{\mathcal{S}} \frac{\partial f(\mathbf{r};\boldsymbol{\theta})}{\partial [\boldsymbol{\theta}]_i} d \mathbf{r},
		\end{equation}		
		where $i\in\mathcal{Q}$ and $\mathcal{Q}=\left\{1,2,...,2M\right\}$.		
		The proof can be found in Appendix \ref{App:A}.
	\end{lemma}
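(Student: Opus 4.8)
The plan is to verify the standard Leibniz-type hypotheses that justify interchanging differentiation with respect to $[\boldsymbol{\theta}]_i$ and integration over the compact aperture $\mathcal{S}$. Since $\mathcal{S}$ is a fixed, bounded rectangle in the $XOY$ plane (independent of $\boldsymbol{\theta}$), the only nontrivial requirement is a dominated-convergence condition: I must exhibit an integrable majorant $g(\mathbf{r})$ with $\bigl|\partial f(\mathbf{r};\boldsymbol{\theta})/\partial[\boldsymbol{\theta}]_i\bigr|\le g(\mathbf{r})$ uniformly for $\boldsymbol{\theta}$ in a neighborhood of the true parameter. First I would write out the integrand explicitly. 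Recalling that $a(\mathbf{r};\alpha_m,\phi_m)=e^{\mathrm{j}k\mathbf{r}^{\mathrm{T}}\mathbf{d}(\alpha_m,\phi_m)}$ with $\mathbf{d}(\alpha_m,\phi_m)=[\cos\alpha_m\cos\phi_m,\sin\alpha_m\cos\phi_m,\sin\phi_m]^{\mathrm{T}}$, the quantity $f(\mathbf{r};\boldsymbol{\theta})=\bigl[x(\mathbf{r},t)-\mathbf{a}^{\mathrm{H}}(\mathbf{r};\boldsymbol{\alpha},\boldsymbol{\phi})\mathbf{s}(t)\bigr]^2$ is a smooth (indeed real-analytic) function of each angular parameter $\alpha_m,\phi_m$, because it is built from the entire function $e^{\mathrm{j}k\mathbf{r}^{\mathrm{T}}\mathbf{d}}$ composed with the smooth map $(\alpha_m,\phi_m)\mapsto\mathbf{d}(\alpha_m,\phi_m)$.

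Next I would compute $\partial f/\partial[\boldsymbol{\theta}]_i$ by the chain rule. Differentiating the steering term produces factors of the form $\mathrm{j}k\,\mathbf{r}^{\mathrm{T}}(\partial\mathbf{d}/\partial[\boldsymbol{\theta}]_i)\,e^{\mathrm{j}k\mathbf{r}^{\mathrm{T}}\mathbf{d}}$; the outer square contributes an additional bounded factor $[x(\mathbf{r},t)-\mathbf{a}^{\mathrm{H}}\mathbf{s}(t)]$. Since $\|\partial\mathbf{d}/\partial[\boldsymbol{\theta}]_i\|_2\le 1$ (the derivatives of sines and cosines are bounded by one), and $|e^{\mathrm{j}k\mathbf{r}^{\mathrm{T}}\mathbf{d}}|=1$, the partial derivative is bounded by a continuous function of $\mathbf{r}$ multiplied by $\|\mathbf{r}\|_2$ and by bounded combinations of the snapshot signals $s_m(t)$. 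Because $\mathbf{r}$ ranges over the compact set $\mathcal{S}$, both $\|\mathbf{r}\|_2$ and every such continuous factor are uniformly bounded on $\mathcal{S}$; one may therefore take $g(\mathbf{r})$ to be a constant $C_0=\sup_{\mathbf{r}\in\mathcal{S},\,\boldsymbol{\theta}\in\mathcal{U}}\bigl|\partial f/\partial[\boldsymbol{\theta}]_i\bigr|$, which is finite and trivially integrable over $\mathcal{S}$ since $|\mathcal{S}|=L_xL_y<\infty$.

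With such a uniform integrable bound in hand, the interchange in \eqref{regcon} follows immediately from the Leibniz integral rule (equivalently, from the dominated convergence theorem applied to the difference quotients $[f(\mathbf{r};\boldsymbol{\theta}+h\mathbf{e}_i)-f(\mathbf{r};\boldsymbol{\theta})]/h$, whose pointwise limit is $\partial f/\partial[\boldsymbol{\theta}]_i$ and which are dominated by $C_0$ via the mean value theorem). The main obstacle is essentially bookkeeping rather than conceptual: one must carry the complex conjugation through the square in $f$ and track the several terms generated by the product rule to confirm that \emph{each} summand admits the same $\mathbf{r}$-uniform bound. Once the boundedness of $\partial\mathbf{d}/\partial[\boldsymbol{\theta}]_i$ and the compactness of $\mathcal{S}$ are invoked, no genuine analytic difficulty remains, and the regularity condition holds for every $i\in\mathcal{Q}$.
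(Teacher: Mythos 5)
Your proposal is correct and follows essentially the same route as the paper's Appendix A: both verify the Leibniz/dominated-convergence hypotheses by noting that $\mathcal{S}$ is a fixed compact rectangle, bounding $\partial a/\partial[\boldsymbol{\theta}]_i$ via the boundedness of $\partial\mathbf{d}/\partial[\boldsymbol{\theta}]_i$ and the unimodularity of the exponential, and then dominating the difference quotients to invoke the dominated convergence theorem. The only cosmetic difference is that you absorb the $\|\mathbf{r}\|_2$ factor into a single constant majorant using compactness, whereas the paper keeps the majorant as $C(1+\|\mathbf{r}\|_2^2)$ and integrates it explicitly over the aperture; both are equivalent on a bounded rectangle.
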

	
	In this case, the parameters to be estimated is $\boldsymbol{\theta}$ and we aim to derive the following symmetric Fisher Information Matrix (FIM) $\mathbf{J}_{\boldsymbol{\theta}\boldsymbol{\theta}}\in \mathbb{R}^{2M\times 2M}$:
	\begin{align}\label{eq:FIM_element2}
			\left[\mathbf{J}_{\boldsymbol{\theta}\boldsymbol{\theta}}\right]_{i,j} &=
			\mathbb{E}\left\{
			\frac{\partial \mathcal{L}}{\partial[\boldsymbol{\theta}]_i}
			\frac{\partial \mathcal{L}}{\partial[\boldsymbol{\theta}]_j}
			\right\}\nonumber\\
			&\stackrel{(a)}{=} \frac{2}{\sigma^2_{\nu}}\sum_{t=1}^{T}\Re\left\{
			\int_{\mathcal{S}} \left(\frac{\partial \mathbf{a}^\mathrm{H}(\mathbf{r};\boldsymbol{\alpha},\boldsymbol{\phi})}{\partial[\boldsymbol{\theta}]_i}\mathbf{s}(t)\right)^*\right.\nonumber\\
			&\quad\quad\quad\quad\quad\quad\quad\left.\left(\frac{\partial \mathbf{a}^{\mathrm{H}}(\mathbf{r};\boldsymbol{\alpha},\boldsymbol{\phi})}{\partial[\boldsymbol{\theta}]_j}\mathbf{s}(t)\right)
			d\mathbf{r}
			\right\},
	\end{align}
	where equation $(a)$ is because of \textbf{Lemma \ref{lemma:1}} and $i,j\in\mathcal{Q}$.
	
	Then, we define $R_{ij}=\mathbb{E}\left[\sum_{t=1}^{N}s_i(t)s_j(t)\right]$ and divide the FIM matrix into $\mathbf{J}_{\alpha\alpha},\mathbf{J}_{\alpha\phi},\mathbf{J}_{\phi\phi}\in \mathbb{R}^{M\times M}$ as follows:
	\begin{equation}\label{eq:FIM_final2}
		\left[\mathbf{J}_{\boldsymbol{\theta}\boldsymbol{\theta}}\right] = \frac{2}{\sigma^2_{\nu}} 
		\left[\begin{array}{l:l}
			\mathbf{J}_{\alpha\alpha} & 
			\mathbf{J}_{\alpha\phi} \\ \hdashline
			\mathbf{J}_{\alpha\phi}^{\mathrm{T}} & 
			\mathbf{J}_{\phi\phi}\\
		\end{array}\right],
	\end{equation}
	where $\mathbf{J}_{\alpha\alpha}=\mathbf{J}_{\alpha\alpha}^\mathrm{T}$, $\mathbf{J}_{\phi\phi}=\mathbf{J}_{\phi\phi}^\mathrm{T}$, and $\mathbf{J}_{\alpha\phi}=\mathbf{J}_{\phi\alpha}^\mathrm{T}$. Specifically, the elements of these matrices are calculated by
	\begin{align}
		\label{FIM:elem}
			&\left[\mathbf{J}_{\alpha\alpha}\right]_{i,j}\nonumber\\=&R_{ij}\int_{\mathcal{S}}\Re\left\{\left(\frac{\partial a(\mathbf{r};\alpha_i,\phi_i)}{\partial\alpha_i}\right)^\ast
			\left(\frac{\partial a(\mathbf{r};\alpha_j,\phi_j)}{\partial\alpha_j}\right)\right\}d\mathbf{r},\\
			&\left[\mathbf{J}_{\alpha\phi}\right]_{i,j}\nonumber\\=&R_{ij}\int_{\mathcal{S}}\Re\left\{\left(\frac{\partial a(\mathbf{r};\alpha_i,\phi_i)}{\partial\alpha_i}\right)^\ast
			\left(\frac{\partial a(\mathbf{r};\alpha_j,\phi_j)}{\partial\phi_j}\right)\right\}d\mathbf{r},\\
			&\left[\mathbf{J}_{\phi\phi}\right]_{i,j}\nonumber\\=&R_{ij}\int_{\mathcal{S}}\Re\left\{\left(\frac{\partial a(\mathbf{r};\alpha_i,\phi_i)}{\partial\phi_i}\right)^\ast
			\left(\frac{\partial a(\mathbf{r};\alpha_j,\phi_j)}{\partial\phi_j}\right)\right\}d\mathbf{r}.
	\end{align}
	For $i\in\mathcal{M}$, the derivative of the steering vector with respect to $\alpha_i$ and $\phi_i$ can be respectively formulated as 
	\begin{align}
		\label{eq:q00}
			&\frac{\partial a(\mathbf{r};\alpha_i,\phi_i)}{\partial\alpha_i}
			= \mathrm{j} k \left(\mathbf{r}^{\mathrm{T}}\frac{\partial \mathbf{d}(\alpha_i,\phi_i)}{\partial\alpha_i}\right)
			e^{\mathrm{j} k\,\mathbf{r}^{\mathrm{T}}\mathbf{d}(\alpha_i,\phi_i)},\\
			&\frac{\partial a(\mathbf{r};\alpha_i,\phi_i)}{\partial\phi_i}
			= \mathrm{j} k \left(\mathbf{r}^{\mathrm{T}}\frac{\partial \mathbf{d}(\alpha_i,\phi_i)}{\partial\phi_i}\right)
			e^{\mathrm{j} k\,\mathbf{r}^{\mathrm{T}}\mathbf{d}(\alpha_i,\phi_i)}.
	\end{align}	
	Since the CAPA is located on the plane $r_z=0$, any point $\mathbf{r}\in\mathcal{S}$ can be represented as
	\begin{equation}
		\label{eq:q0qz}
		\mathbf{r}=[r_x,r_y,0]^{\mathrm{T}}.
	\end{equation}	
	By using Eq. (\ref{eq:q0qz}) and recalling that $\mathbf{d}(\alpha_m,\phi_m)=[\cos\alpha_m\cos\phi_m,\sin\alpha_m\cos\phi_m,\sin\phi_m]^{\mathrm{T}}$, for $\forall i\in\mathcal{M}$, we have
	\begin{align}
		\label{eq:qtqyq}
			\mathbf{r}^{\mathrm{T}}\frac{\partial \mathbf{d}(\alpha_i,\phi_i)}{\partial \alpha_i}
			&= r_x\cos\alpha_i\cos\phi_i, \\
			\mathbf{r}^{\mathrm{T}}\frac{\partial \mathbf{d}(\alpha_i,\phi_i)}{\partial \phi_i}
			&\label{eq::555}= -r_x\sin\alpha_i\sin\phi_i+r_y\cos\phi_i.
	\end{align}	
	Then, substituting Eqs. (\ref{eq:q00})-(\ref{eq::555}) into Eq. (\ref{eq:FIM_final2}) gives rise to Eq. (\ref{eq:eq35f}), as shown in the bottom of this page.	
	\begin{figure*}[b]
		\hrulefill
		\begin{align}
			\label{eq:eq35f}
				&\left[\mathbf{J}_{\alpha\alpha}\right]_{ij}=k^2R_{ij}\int_{-\frac{L_y}{2}}^{\frac{L_y}{2}}\int_{-\frac{L_x}{2}}^{\frac{L_x}{2}}\left[r_x^2\cos\alpha_i\cos\phi_i\cos\alpha_j\cos\phi_j\cos\left(k\left(\Delta d_yr_x+\Delta d_zr_y\right)\right)\right]dr_xdr_y,\\
			\label{eq:eq36f}	&\left[\mathbf{J}_{\alpha\phi}\right]_{ij}=k^2R_{ij}\int_{-\frac{L_y}{2}}^{\frac{L_y}{2}}\int_{-\frac{L_x}{2}}^{\frac{L_x}{2}}\left[( r_x\cos\alpha_i\cos\phi_i)(-r_x\sin\alpha_j\sin\phi_j+r_y\cos\phi_j)\cos\left(k\left(\Delta d_yr_x+\Delta d_zr_y\right)\right)\right]dr_xdr_y,\\		
			\label{eq:eq37f}	&\left[\mathbf{J}_{\phi\phi}\right]_{ij}=k^2R_{ij}\int_{-\frac{L_y}{2}}^{\frac{L_y}{2}}\int_{-\frac{L_x}{2}}^{\frac{L_x}{2}}\left[(-r_x\sin\alpha_i\sin\phi_i+r_y\cos\phi_i)(-r_x\sin\alpha_j\sin\phi_j+r_y\cos\phi_j)\cos\left(k\left(\Delta d_yr_x+\Delta d_zr_y\right)\right)\right]dr_xdr_y.
		\end{align}
	\end{figure*}	
	The CRLB for any unbiased estimator of $\hat{\boldsymbol{\theta}}=[\hat{\boldsymbol{\alpha}},\hat{\boldsymbol{\phi}}]^{\mathrm{T}}$ is given by
	\begin{equation}\label{eq:CRLB_final}
		\mathrm{Cov}(\hat{\boldsymbol{\theta}}) \ge (\mathbf{J}_{\boldsymbol{\theta}\boldsymbol{\theta}})^{-1},
	\end{equation}
	where $\hat{\boldsymbol{\alpha}}=\left[\hat{\alpha}_1,\hat{\alpha}_2,...,\hat{\alpha}_M\right]$ and $\hat{\boldsymbol{\phi}}=\left[\hat{\phi}_1,\hat{\phi}_2,...,\hat{\phi}_M\right]$. For each target $m\in\mathcal{M}$, $\hat{\alpha}_m$ and $\hat{\phi}_m$ denote the estimated azimuth and elevation angles, satisfying that:
	\begin{align}
		\mathrm{Var}(\hat{\alpha}_m) &\ge \left[(\mathbf{J}_{\boldsymbol{\theta}\boldsymbol{\theta}})^{-1}\right]_{m,m},\\[1mm]
		\mathrm{Var}(\hat{\phi}_m)  &\ge \left[(\mathbf{J}_{\boldsymbol{\theta}\boldsymbol{\theta}})^{-1}\right]_{M+m,M+m}.
	\end{align}
	%	where $\left[\mathbf{J}(\alpha,\phi)^{-1}\right]_{1,1}\in R^{M\times M}$ denotes the matrix consisting of the first $M$ rows and $M$ columns of $\mathbf{J}(\alpha,\phi)^{-1}$, $\left[\mathbf{J}(\alpha,\phi)^{-1}\right]_{2,2}\in R^{M\times M}$ denotes the matrix consisting of the last  $M$ rows and $M$ columns of $\mathbf{J}(\alpha,\phi)^{-1}$, respectively.

	\begin{theorem}\normalfont
		Under the situation that target snapshots $\left\{\mathbf{s}(t)\right\}_{t=1}^T$ are known, the CRLB for CAPA-based DOA estimation of the $m$-th target is
		\begin{align}\label{eq::crlb}
				&\mathrm{CRLB}_k(\alpha_m) = \left[(\mathbf{J}_{\boldsymbol{\theta}\boldsymbol{\theta}})^{-1}\right]_{m,m}, m\in\mathcal{M},\\
				&\mathrm{CRLB}_k(\phi_m) = \left[(\mathbf{J}_{\boldsymbol{\theta}\boldsymbol{\theta}})^{-1}\right]_{M+m,M+m}, m\in\mathcal{M},
		\end{align}
		where the subscript $k$ is utilized to highlight that the above results refer to the case with known target snapshots and the definition of $\mathbf{J}_{\boldsymbol{\theta}\boldsymbol{\theta}}$ are given in Eqs.  (\ref{eq:FIM_final2}) and (\ref{eq:eq35f}).
	\end{theorem}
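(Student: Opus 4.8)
The plan is to connect the already-derived Fisher information blocks to the standard Cram\'er--Rao inequality, so that the argument is essentially a matter of assembling the pieces established earlier in this section. First I would observe that, since the snapshots $\{\mathbf{s}(t)\}_{t=1}^T$ are treated as known, the only unknown parameter is $\boldsymbol{\theta}=[\boldsymbol{\alpha}^{\mathrm{T}},\boldsymbol{\phi}^{\mathrm{T}}]^{\mathrm{T}}\in\mathbb{R}^{2M}$, and the relevant log-likelihood is exactly $\mathcal{L}$ in Eq. (\ref{eq:ll}). The whole derivation then reduces to computing the $2M\times 2M$ FIM $\mathbf{J}_{\boldsymbol{\theta}\boldsymbol{\theta}}$ whose $(i,j)$ entry is $\mathbb{E}\{\partial_{[\boldsymbol{\theta}]_i}\mathcal{L}\,\partial_{[\boldsymbol{\theta}]_j}\mathcal{L}\}$, and then invoking $\mathrm{Cov}(\hat{\boldsymbol{\theta}})\succeq \mathbf{J}_{\boldsymbol{\theta}\boldsymbol{\theta}}^{-1}$ to read off the diagonal entries.

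Next I would carry out the differentiation of $\mathcal{L}$. Writing the residual as $E(\mathbf{r},t)-\mathbf{a}^{\mathrm{H}}(\mathbf{r};\boldsymbol{\alpha},\boldsymbol{\phi})\mathbf{s}(t)$ and differentiating the spatial integral with respect to $[\boldsymbol{\theta}]_i$, the crucial step is to pull the derivative inside the integral over $\mathcal{S}$; this is precisely what Lemma \ref{lemma:1} guarantees through the regularity condition (\ref{regcon}). Each partial derivative brings down a factor $\big(\partial_{[\boldsymbol{\theta}]_i}\mathbf{a}^{\mathrm{H}}\big)\mathbf{s}(t)$ multiplied by the residual. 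Taking the expectation, the terms linear in the zero-mean, spatially white noise vanish, while the product of two such factors survives, and the Gaussian structure forces the surviving contribution to appear as twice the real part, yielding exactly the form in Eq. (\ref{eq:FIM_element2}). Introducing the signal correlations $R_{ij}=\mathbb{E}[\sum_t s_i(t)s_j(t)]$ then collapses the sum over snapshots into the compact block structure (\ref{eq:FIM_final2}).

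I would then substitute the explicit steering-vector derivatives. Using Eqs. (\ref{eq:q00})--(\ref{eq::555}), which exploit $r_z=0$ on the aperture, each integrand reduces to a product of $r_x,r_y$ with the trigonometric factors of $(\alpha_i,\phi_i,\alpha_j,\phi_j)$ against the common phase $\exp(\mathrm{j}k(\Delta d_y r_x+\Delta d_z r_y))$; taking the real part produces the cosine kernel and gives the closed integrals (\ref{eq:eq35f})--(\ref{eq:eq37f}) for the three blocks. Finally, applying the standard Cram\'er--Rao bound (\ref{eq:CRLB_final}) to any unbiased estimator of $\boldsymbol{\theta}$ identifies $\mathrm{CRLB}_k(\alpha_m)$ and $\mathrm{CRLB}_k(\phi_m)$ with the $(m,m)$ and $(M+m,M+m)$ diagonal entries of $\mathbf{J}_{\boldsymbol{\theta}\boldsymbol{\theta}}^{-1}$, which is the claim.

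The main obstacle is the continuous, infinite-dimensional nature of the measurement: the FIM is defined through a spatial integral over $\mathcal{S}$ rather than a finite sum, so both the interchange of differentiation and integration and the subsequent $Z\to 0$ limit must be justified rather than taken for granted. Lemma \ref{lemma:1} removes the first difficulty, and the spatially white noise model ensures the expectation of the integrated product reduces to the deterministic $R_{ij}$-weighted kernel without cross-correlation artifacts across distinct aperture points. Verifying that these two facts together make every formal manipulation rigorous is the technical heart of the proof, whereas the trigonometric reduction to (\ref{eq:eq35f})--(\ref{eq:eq37f}) is routine once the interchange is licensed.
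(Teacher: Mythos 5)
Your proposal is correct and follows essentially the same route as the paper: form the log-likelihood \eqref{eq:ll}, invoke Lemma \ref{lemma:1} to pass the derivative inside the spatial integral, assemble the FIM blocks via the steering-vector derivatives \eqref{eq:q00}--\eqref{eq::555} into \eqref{eq:eq35f}--\eqref{eq:eq37f}, and read the bound off the diagonal of $\mathbf{J}_{\boldsymbol{\theta}\boldsymbol{\theta}}^{-1}$ through \eqref{eq:CRLB_final}. Your additional remarks on why the noise cross-terms vanish in the expectation and on the $Z\to 0$ limit are slightly more explicit than the paper's presentation but do not constitute a different argument.
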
 
		
	This completes the derivation of the CRLB for DOA estimation using a continuous aperture CAPA system.
	
	\subsection{The Case With Unknown Signal Snapshots}

	In this situation, the parameters to be estimated consist of the DOA $\boldsymbol{\theta}$, real parts of the snapshots $\mathbf{s}(t)$, and image parts of the snapshots $\mathbf{s}(t)$ for $t\in\mathcal{T}$. Hence, we define the following unknown vector:
	\begin{align}
			\boldsymbol{\eta}=&\left[\boldsymbol{\theta},\Re\{\mathbf{s}(1)\},...,\Re\{\mathbf{s}(T)\},\Im\{\mathbf{s}(1),...,\mathbf{s}(T)\}\right]\nonumber\\
			=&\left[\boldsymbol{\theta},\mathbf{s}\right]\in\mathbb{R}^{2M+2MT},
	\end{align}
	and let $\eta_i$ be the $i$-th parameter to be estimated. Then, we obtain the elements of the FIM:
	\begin{equation}\label{eq:matdec}
		\mathbf{J}_{\boldsymbol{\eta}}=\left[\begin{matrix}
			\mathbf{J}_{\boldsymbol{\theta}\boldsymbol{\theta}} & 
			\mathbf{J}_{\boldsymbol{\theta}\mathbf{s}}^{(1)}&\cdots&\mathbf{J}_{\boldsymbol{\theta}\mathbf{s}}^{(T)} \\
			\mathbf{J}_{\mathbf{s}\boldsymbol{\theta}}^{(1)} & 
			\mathbf{J}_{\mathbf{s}\mathbf{s}}^{(1)}&\mathbf{0}&\mathbf{0}\\
			\vdots &\mathbf{0}&\ddots&\mathbf{0}\\
			\mathbf{J}_{\mathbf{s}\boldsymbol{\theta}}^{(T)}&\mathbf{0}&\mathbf{0}&\mathbf{J}_{\mathbf{s}\mathbf{s}}^{(T)}
		\end{matrix}\right]\in\mathbb{R}^{2M+2MT},
	\end{equation}
	in which $\mathbf{J}_{\boldsymbol{\theta}\boldsymbol{\theta}}$ has been elaborated in Eqs. (\ref{eq:eq35f})-\eqref{eq:eq37f} and according to \cite{d2022cramer}, the $(i,j)$-th entry in $\mathbf{J}_{\mathbf{s}\mathbf{s}}^{(t)}$ and $\mathbf{J}_{\boldsymbol{\theta}\mathbf{s}}^{(t)}$ in FIM can be calculated as follows:
	\begin{align}
		\label{eq:JJIIJJ}
			&\left[\mathbf{J}_{\mathbf{s}\mathbf{s}}^{(t)}\right]_{i,j}=\frac{2}{\sigma_\nu^2}\int_{\mathcal{S}}\Re\left\{\frac{\partial H^\ast(\mathbf{r},\boldsymbol{\theta};t)}{\partial\left[\mathbf{s}(t)\right]_i}\frac{\partial H(\mathbf{r},\boldsymbol{\theta};t)}{\partial\left[\mathbf{s}(t)\right]_j}\right\}d\mathbf{r},\\
			&\label{eq:JJIIJJ2}
			\left[\mathbf{J}_{\boldsymbol{\theta}\mathbf{s}}^{(t)}\right]_{i,j}=\frac{2}{\sigma_\nu^2}\int_{\mathcal{S}}\Re\left\{\frac{\partial H^\ast(\mathbf{r},\boldsymbol{\theta};t)}{\partial\left[\mathbf{s}(t)\right]_i}\frac{\partial H(\mathbf{r},\boldsymbol{\theta};t)}{\partial\left[\boldsymbol{\theta}\right]_j}\right\}d\mathbf{r},
	\end{align}
	where $H(\mathbf{r},\boldsymbol{\theta};t)=\mathbf{a}^{\mathrm{T}}(\mathbf{r};\boldsymbol{\alpha},\boldsymbol{\phi})\mathbf{s}(t)$, $\left[\mathbf{J}_{\mathbf{s}\boldsymbol{\theta}}^{(t)}\right]_{i,j}=\left[\mathbf{J}_{\boldsymbol{\theta}\mathbf{s}}^{(t)}\right]_{j,i}$, and $i,j\in\mathcal{Q}$. 	In Eqs. (\ref{eq:JJIIJJ}) and (\ref{eq:JJIIJJ2}), the elements are specified by
	\begin{align}
			&\frac{\partial H(\mathbf{r},\boldsymbol{\theta};t)}{\partial\Re\left\{s_i(t)\right\}}=e^{\mathrm{j} k \left(r_x\sin\alpha_i\cos\phi_i+r_y\sin\phi_i\right)}, i\in \mathcal{M}\\
			&\frac{\partial H(\mathbf{r},\boldsymbol{\theta};t)}{\partial\Im\left\{s_i(t)\right\}}=\mathrm{j}e^{\mathrm{j} k \left(r_x\sin\alpha_i\cos\phi_i+r_y\sin\phi_i\right)},i\in \mathcal{M} \\
			&\frac{\partial H(\mathbf{r},\boldsymbol{\theta};t)}{\partial\alpha_i}=\mathrm{j} k \left(r_x\cos\alpha_i\cos\phi_i\right)s_i(t)\nonumber\\
			&\quad\quad\quad\quad\quad\quad \times e^{\mathrm{j} k \left(r_x\sin\alpha_i\cos\phi_i+r_y\sin\phi_i\right)},i\in \mathcal{M}\\
			&\frac{\partial H(\mathbf{r},\boldsymbol{\theta};t)}{\partial\phi_i}=\mathrm{j} k \left(r_x\sin\alpha_i\sin\phi_i-r_y\cos\phi_i\right)s_i(t)\nonumber\\
			&\quad\quad\quad\quad\quad\quad \times e^{\mathrm{j} k \left(r_x\sin\alpha_i\cos\phi_i+r_y\sin\phi_i\right)}, i\in \mathcal{M}.
	\end{align}

	According to the well known formulas on the inverse of partitioned \cite{kay1993fundamentals}, the inverse of matrix $\left[\begin{matrix}
		\mathbf{A}&\mathbf{B}\\
		\mathbf{C}&\mathbf{D}
	\end{matrix}\right]$ can be derived by Eq. (\ref{Eq:fkjz}), as shown at the bottom of the next page. 
	\begin{figure*}[b]
		\hrulefill
		\begin{equation}
			\label{Eq:fkjz}
			\left[\begin{matrix}
				\mathbf{A}&\mathbf{B}\\
				\mathbf{C}&\mathbf{D}
			\end{matrix}\right]^{-1}=
			\left[\begin{array}{cc}
				\left(\mathbf{A}-\mathbf{B} \mathbf{D}^{-1} \mathbf{C}\right)^{-1} & -\left(\mathbf{A}-\mathbf{B} \mathbf{D}^{-1} \mathbf{C}\right)^{-1} \mathbf{B} D^{-1} \\
				-\mathbf{D}^{-1} \mathbf{C}\left(\mathbf{A}-\mathbf{B} \mathbf{D}^{-1} \mathbf{C}\right)^{-1} & \mathbf{D}^{-1}+\mathbf{D}^{-1} \mathbf{C}\left(\mathbf{A}-\mathbf{B} \mathbf{D}^{-1} \mathbf{C}\right)^{-1} \mathbf{B} \mathbf{D}^{-1}
			\end{array}\right].
		\end{equation}
	\end{figure*}	
	 Hence, the CRLB of DOA estimation can be formulated as
	\begin{align}\label{eq:crlbtj}
			&\mathrm{CRLB}(\boldsymbol{\theta})=\mathbf{T}^{-1},\\
			&\mathbf{T} \triangleq \mathbf{J}_{\boldsymbol{\theta} \boldsymbol{\theta}}-\sum_{t=1}^T \mathbf{J}_{\boldsymbol{\theta} \mathbf{s}}^{(t)}\left(\mathbf{J}_{\mathbf{s s}}^{(t)}\right)^{-1} \mathbf{J}_{\mathbf{s} \boldsymbol{\theta}}^{(t)}.
	\end{align}	
	Then, the CRLB for CAPA-based DOA estimation is concluded in the following theorem:
	\begin{theorem}\label{the:the2}
		\normalfont		
		Under the situation that the target snapshots $\left\{\mathbf{s}(t)\right\}_{t=1}^T$ are unknown, the CRLB for CAPA-based DOA estimation of the $m$-th target is
		\begin{align}\label{eq::crlb2}
				&\mathrm{CRLB}_u(\alpha_m) = \left[\mathbf{T}^{-1}\right]_{m,m},m\in\mathcal{M}, \\
				&\mathrm{CRLB}_u(\phi_m) = \left[\mathbf{T}^{-1}\right]_{M+m,M+m}, m\in\mathcal{M},
		\end{align}
		where the subscript $u$ is used to highlight that the above results refer to the case of unknown target snapshots, and $\mathbf{T}$ is defined in Eq. (\ref{eq:crlbtj}).
	\end{theorem}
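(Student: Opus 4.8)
The plan is to treat the DOA vector $\boldsymbol{\theta}$ as the parameters of interest and the stacked snapshot vector $\mathbf{s}$ as nuisance parameters, and then to extract the $\boldsymbol{\theta}$-block of the inverse of the joint FIM $\mathbf{J}_{\boldsymbol{\eta}}$ in Eq.~(\ref{eq:matdec}) via the partitioned-inverse identity of Eq.~(\ref{Eq:fkjz}). Concretely, I would cast $\mathbf{J}_{\boldsymbol{\eta}}$ into the $2\times 2$ block form with $\mathbf{A}=\mathbf{J}_{\boldsymbol{\theta}\boldsymbol{\theta}}$, off-diagonal block $\mathbf{B}=[\mathbf{J}_{\boldsymbol{\theta}\mathbf{s}}^{(1)},\dots,\mathbf{J}_{\boldsymbol{\theta}\mathbf{s}}^{(T)}]$ with $\mathbf{C}=\mathbf{B}^{\mathrm{T}}$, and nuisance block $\mathbf{D}=\mathrm{diag}(\mathbf{J}_{\mathbf{ss}}^{(1)},\dots,\mathbf{J}_{\mathbf{ss}}^{(T)})$. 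Since the CRLB of the parameters of interest in the presence of nuisance parameters is the corresponding block of the full inverse FIM, the bound for $\boldsymbol{\theta}$ is the top-left block of $\mathbf{J}_{\boldsymbol{\eta}}^{-1}$, which by Eq.~(\ref{Eq:fkjz}) equals $(\mathbf{A}-\mathbf{B}\mathbf{D}^{-1}\mathbf{C})^{-1}$.

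First I would justify that the nuisance block $\mathbf{D}$ is block-diagonal, i.e., that the cross-information between snapshots at distinct instants $t_1\neq t_2$ vanishes. This follows because the log-likelihood in Eq.~(\ref{eq:ll}) is a sum over $t$ of terms each depending only on $\mathbf{s}(t)$, so $\partial^2\mathcal{L}/\partial[\mathbf{s}(t_1)]_i\,\partial[\mathbf{s}(t_2)]_j=0$ whenever $t_1\neq t_2$; taking expectations reproduces the zero off-diagonal blocks already displayed in Eq.~(\ref{eq:matdec}). Next I would exploit this structure: because $\mathbf{D}^{-1}=\mathrm{diag}((\mathbf{J}_{\mathbf{ss}}^{(1)})^{-1},\dots,(\mathbf{J}_{\mathbf{ss}}^{(T)})^{-1})$, the quadratic form $\mathbf{B}\mathbf{D}^{-1}\mathbf{C}$ decouples into the single sum $\sum_{t=1}^{T}\mathbf{J}_{\boldsymbol{\theta}\mathbf{s}}^{(t)}(\mathbf{J}_{\mathbf{ss}}^{(t)})^{-1}\mathbf{J}_{\mathbf{s}\boldsymbol{\theta}}^{(t)}$, which is precisely the Schur complement $\mathbf{T}$ of Eq.~(\ref{eq:crlbtj}). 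Hence $[\mathbf{J}_{\boldsymbol{\eta}}^{-1}]_{\boldsymbol{\theta}\boldsymbol{\theta}}=\mathbf{T}^{-1}$, and reading off its $m$-th and $(M+m)$-th diagonal entries yields the two claimed expressions for $\mathrm{CRLB}_u(\alpha_m)$ and $\mathrm{CRLB}_u(\phi_m)$. Throughout, I would invoke Lemma~\ref{lemma:1} to interchange the spatial integration over $\mathcal{S}$ with differentiation when forming the FIM entries of Eqs.~(\ref{eq:JJIIJJ}) and (\ref{eq:JJIIJJ2}), and carry the $\lim_{Z\to 0}$ limit of the continuous-discrete equivalence through each block so the finite sums pass correctly to their integral counterparts.

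The hard part will be handling the infinite-dimensional, limiting nature of the problem while applying a finite partitioned-inverse identity. Two points require genuine care: one must verify that each $\mathbf{J}_{\mathbf{ss}}^{(t)}$ is nonsingular so that the Schur complement $\mathbf{T}$ is well-defined, and one must confirm that the $\lim_{Z\to 0}$ scaling appearing in the covariance and FIM definitions does not disturb the block partition before inversion. Establishing invertibility of $\mathbf{J}_{\mathbf{ss}}^{(t)}$ ultimately amounts to showing that the steering responses $\{a(\mathbf{r};\alpha_i,\phi_i)\}_{i=1}^{M}$ stay linearly independent as functions on $\mathcal{S}$ (a Gram-matrix positivity argument for distinct DOAs), which is exactly where the continuous-aperture structure must be used carefully rather than borrowed wholesale from the discrete case.
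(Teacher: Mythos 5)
Your proposal is correct and follows essentially the same route as the paper: partition the joint FIM $\mathbf{J}_{\boldsymbol{\eta}}$ of Eq.~(\ref{eq:matdec}) with $\boldsymbol{\theta}$ as the parameters of interest and the snapshots as block-diagonal nuisance parameters, apply the partitioned-inverse identity of Eq.~(\ref{Eq:fkjz}), and identify the resulting Schur complement with $\mathbf{T}$ in Eq.~(\ref{eq:crlbtj}). Your added remarks on verifying the nonsingularity of each $\mathbf{J}_{\mathbf{ss}}^{(t)}$ and on carrying the $\lim_{Z\to 0}$ limit through the block structure are reasonable points of rigor that the paper itself does not address explicitly, but they do not change the argument.
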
 
	
	By using the matrix inversion lemma, $\mathbf{T}^{-1}$ can be further derived as
	\begin{align}
		\label{eq:remt-}
			&\quad\left[\mathbf{T}^{-1}\right]_{i,i}\nonumber\\
			&=\left[\mathbf{J}_{\boldsymbol{\theta} \boldsymbol{\theta}}^{-1}\right]_{i,i}+\left[\mathbf{J}_{\boldsymbol{\theta} \boldsymbol{\theta}}^{-1}\mathbf{V}(\mathbf{Z}^{-1}+\mathbf{V}^\mathrm{T}\mathbf{J}_{\boldsymbol{\theta} \boldsymbol{\theta}}^{-1}\mathbf{V})\mathbf{V}^{\mathrm{T}}\mathbf{J}_{\boldsymbol{\theta} \boldsymbol{\theta}}^{-1}
			\right]_{i,i},
	\end{align}
	where $\mathbf{Z}=\text{diag}\left\{\mathbf{J}_{\mathbf{s}\mathbf{s}}^{(1)},\mathbf{J}_{\mathbf{s}\mathbf{s}}^{(2)},...,\mathbf{J}_{\mathbf{s}\mathbf{s}}^{(T)}\right\}$, $\text{diag}\{\cdot\}$ denotes the block diagonal matrix, and $\mathbf{V}=\left[\mathbf{J}_{\boldsymbol{\theta}\mathbf{s}}^{(1)},\mathbf{J}_{\boldsymbol{\theta}\mathbf{s}}^{(2)},...,\mathbf{J}_{\boldsymbol{\theta}\mathbf{s}}^{(T)}\right]$.

	\begin{remark}\normalfont
		By comparing the results in Eqs. (\ref{eq::crlb}) and (\ref{eq:remt-}), we can analyze the relationship between CRLBs with and without knowledge of target snapshots. Notice that $\mathbf{J}_{\boldsymbol{\theta} \boldsymbol{\theta}}-\sum_{t=1}^T \mathbf{J}_{\boldsymbol{\theta} \mathbf{s}}^{(t)}\left(\mathbf{J}_{\mathbf{s s}}^{(t)}\right)^{-1} \mathbf{J}_{\mathbf{s} \boldsymbol{\theta}}^{(t)}$ is a positive semi-definite matrix because it is the $2M\times 2M$ block in the upper left corner of matrix $\mathbf{J}_{\boldsymbol{\eta}}$, which is semi-positive definite. Hence, the second term in Eq. (\ref{eq:remt-}) is also a positive semi-definite matrix and the CRLBs with and without prior knowledge of $\mathbf{s}(t)$ satisfy that:
		\begin{align}\label{rem:wq}
				&\mathrm{CRLB}_k(\alpha_m)\leq \mathrm{CRLB}_u(\alpha_m), m\in\mathcal{M},\\ &\mathrm{CRLB}_k(\phi_m)\leq \mathrm{CRLB}_u(\phi_m), m\in\mathcal{M}.
		\end{align}		
		This result indicates that the estimation accuracy of DOA worsens without the knowledge of target snapshots \cite{9906802}.
	\end{remark}
	
	Based on the above analysis, we further derive the following observation for a special case in which the equality in Eq. (\ref{rem:wq}) holds:

	\begin{proposition}\label{prop:1}
		Consider the special case with only one target. If the CAPA plane is symmetric with respect to the origin, then the CRLBs for DOA estimation are identical, regardless of whether prior knowledge of $\mathbf{s}(t)$ is available.
	\end{proposition}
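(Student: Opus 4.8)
The plan is to reduce the claimed equality to the vanishing of the DOA--signal coupling block and then exploit the origin symmetry of the aperture. I would start from the partitioned-inverse expression in Eq. \eqref{eq:remt-}: the diagonal entries of $\mathbf{T}^{-1}$ equal those of $\mathbf{J}_{\boldsymbol{\theta}\boldsymbol{\theta}}^{-1}$ plus a correction term that is built entirely from $\mathbf{V}=[\mathbf{J}_{\boldsymbol{\theta}\mathbf{s}}^{(1)},\dots,\mathbf{J}_{\boldsymbol{\theta}\mathbf{s}}^{(T)}]$, appearing sandwiched on both sides. Consequently $\mathrm{CRLB}_u=\mathrm{CRLB}_k$ holds precisely when this correction vanishes, and a clean sufficient condition is $\mathbf{J}_{\boldsymbol{\theta}\mathbf{s}}^{(t)}=\mathbf{0}$ for every $t$. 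The whole proposition therefore reduces to showing that, for a single target over an origin-symmetric aperture, the cross-information block is identically zero.

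Next I would compute the four relevant entries. Specializing to $M=1$, the coupling block pairs $\{\Re\{s_1(t)\},\Im\{s_1(t)\}\}$ with $\{\alpha_1,\phi_1\}$ through Eq. \eqref{eq:JJIIJJ2}. The key step is to observe that each of the four partial derivatives of $H$ listed below Eq. \eqref{eq:JJIIJJ2} carries the \emph{same} exponential factor $e^{\mathrm{j} k(r_x\sin\alpha_1\cos\phi_1+r_y\sin\phi_1)}$. When one forms the product of a conjugated signal-derivative with a DOA-derivative, these exponentials cancel exactly, leaving inside each integral only a first-degree polynomial in $(r_x,r_y)$ times a constant determined by $s_1(t)$. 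For instance, the $(\Re\{s_1\},\alpha_1)$ entry collapses to $-k\cos\alpha_1\cos\phi_1\,\Im\{s_1(t)\}\int_{\mathcal{S}} r_x\,d\mathbf{r}$, and the remaining three entries reduce analogously to linear combinations of $\int_{\mathcal{S}} r_x\,d\mathbf{r}$ and $\int_{\mathcal{S}} r_y\,d\mathbf{r}$. I would also verify directly that the polynomial prefactors ($\cos\alpha_1\cos\phi_1$, $\sin\alpha_1\sin\phi_1$, $\cos\phi_1$) are constants in $(r_x,r_y)$, so no extra spatial dependence creeps in.

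Then symmetry finishes the argument. Because $\mathcal{S}$ is symmetric about the origin, both aperture first moments vanish, $\int_{\mathcal{S}} r_x\,d\mathbf{r}=\int_{\mathcal{S}} r_y\,d\mathbf{r}=0$ (odd integrand over a symmetric domain, e.g. $\int_{-L_x/2}^{L_x/2} r_x\,dr_x=0$). Since every entry of $\mathbf{J}_{\boldsymbol{\theta}\mathbf{s}}^{(t)}$ is a linear combination of exactly these two moments, the entire block is zero, hence $\mathbf{V}=\mathbf{0}$. Substituting into Eq. \eqref{eq:remt-} kills the correction term and gives $[\mathbf{T}^{-1}]_{i,i}=[\mathbf{J}_{\boldsymbol{\theta}\boldsymbol{\theta}}^{-1}]_{i,i}$ for $i\in\{1,M+1\}$, which is exactly $\mathrm{CRLB}_u(\alpha_1)=\mathrm{CRLB}_k(\alpha_1)$ and $\mathrm{CRLB}_u(\phi_1)=\mathrm{CRLB}_k(\phi_1)$.

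The main obstacle is the phase-cancellation step and making clear why it is special to $M=1$. I would emphasize that the coupling integrals collapse to aperture first moments only after the common exponential cancels, which requires the conjugated factor and the DOA-derivative factor to belong to the same target; with $M\geq 2$ a residual oscillatory phase $e^{\mathrm{j}(\psi_j-\psi_i)}$ survives, the integrand is no longer a bare first-degree polynomial, and the first-moment argument no longer applies, so the coupling generally does not vanish. This is precisely the point where both hypotheses of the proposition — a single target and origin symmetry — are indispensable.
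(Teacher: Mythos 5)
Your proposal is correct and follows essentially the same route as the paper: both reduce the claim to the vanishing of the DOA--signal coupling block $\mathbf{J}_{\boldsymbol{\theta}\mathbf{s}}^{(t)}$ in \eqref{eq:remt-}, exploit the unit-modulus cancellation of the common exponential in the integrand of \eqref{eq:JJIIJJ2}, and conclude from the vanishing first moments $\int_{\mathcal{S}}r_x\,d\mathbf{r}=\int_{\mathcal{S}}r_y\,d\mathbf{r}=0$ of the origin-symmetric aperture. Your entry-by-entry evaluation and the closing remark on why the argument is specific to $M=1$ are somewhat more explicit than the paper's version, but the substance is identical.
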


	\begin{proof}
		In this case, the coupling term $\mathbf{J}_{\mathbf{s}\mathbf{s}}^{(t)}$ between the snapshot signal and steering vector can be formulated as
		\begin{align}\label{eqlmi}
				\left[\mathbf{J}_{\boldsymbol{\theta}\mathbf{s}}^{(t)}\right]_{i,j}=&\frac{2}{\sigma_\nu^2}\int_{\mathcal{S}}\Re\left\{\left|a(\mathbf{r},\boldsymbol{\theta})\right|^2s(t)\mathbf{r}^{\mathrm{T}}\frac{\partial \mathbf{d}(\alpha,\phi)}{\partial [\boldsymbol{\theta}]_i}\right\}d\mathbf{r}\nonumber\\
				=&\frac{2s(t)}{\sigma_\nu^2}\int_{\mathcal{S}}\Re\left\{\mathbf{r}^{\mathrm{T}}\frac{\partial \mathbf{d}(\alpha,\phi)}{\partial [\boldsymbol{\theta}]_i}\right\}.
		\end{align}		
		Recalling Eq. (\ref{eq:qtqyq}) and the integral term yields that
		\begin{align}\label{eq:imrx}
				&\int_{\mathcal{S}}\Re\left\{\mathbf{r}^{\mathrm{T}}\frac{\partial \mathbf{d}(\alpha,\phi)}{\partial \alpha}\right\}\nonumber\\
				=&\int_{-\frac{L_y}{2}}^{\frac{L_y}{2}}\int_{-\frac{L_x}{2}}^{\frac{L_x}{2}}r_x\cos\alpha\cos\phi dr_xdr_y=0,\\
				\label{eq:imr2x}&\int_{\mathcal{S}}\Re\left\{\mathbf{r}^{\mathrm{T}}\frac{\partial \mathbf{d}(\alpha,\phi)}{\partial \phi}\right\}\nonumber\\
				=&\int_{-\frac{L_y}{2}}^{\frac{L_y}{2}}\int_{-\frac{L_x}{2}}^{\frac{L_x}{2}}\left(r_y\cos\phi-r_x\sin\alpha\sin\phi\right) dr_xdr_y=0,
		\end{align}				
		Substituting (\ref{eq:imrx}) and \eqref{eq:imr2x} into Eq. (\ref{eq:remt-}), it can be further concluded that:
		\begin{align}\label{eq:conc58}
				\mathrm{CRLB}_k(\alpha)= \mathrm{CRLB}_u(\alpha),\\ \label{eq:conc582}
				\mathrm{CRLB}_k(\phi)= \mathrm{CRLB}_u(\phi),
		\end{align}
		which completes the proof.
	\end{proof}

	\begin{remark}\normalfont
		Although Eqs. (\ref{eq:conc58}) and \eqref{eq:conc582} holds mathematically under symmetric CAPA configuration, it does not imply that the signal and direction parameters are completely decoupled in practice. The CRLB depends on the projection of the derivative vector onto the entire signal subspace rather than this single inner product. Therefore, even in this special case, uncertainty in the signal still affects DOA estimation performance via other statistical dependencies.
	\end{remark}

	 \subsection{Comparison With SPDAs}\label{sec::sdap}
	 In this section, we compare the performance differences between CAPA and traditional SPDA-based DOA estimation methods. We simplify the problem into the DOA estimation for single target with known target snapshots, with the azimuth and elevation angles denoted by $\alpha$ and $\phi$, respectively.
	 \subsubsection{CAPA}
	The elements of FIM can be expressed as follows:
	\begin{align}
			&J_{\alpha\alpha}=k^2R_s\left[\frac{L_x^3L_y}{12}\cos^2\alpha\cos^2\phi\right],\\
			&J_{\alpha\phi}=k^2R_s\left[-\frac{L_x^3L_y}{12}\sin\alpha\cos\alpha\sin\phi\cos\phi\right],\\
			&J_{\phi\phi}=k^2R_s\left[\frac{L_x^3L_y}{12}\sin^2\alpha\sin ^2\phi+\frac{L_xL_y^3}{12}\cos^2\phi\right].
	\end{align}	
	The determinant of FIM $\det (\mathbf{J}_{\boldsymbol{\theta}\boldsymbol{\theta}})=\frac{4}{\sigma_\nu^4}\left({J}_{\alpha\phi}^2-{J}_{\alpha\alpha}{J}_{\phi\phi}\right)$ can be formulated as
	\begin{equation}\label{eq:hls}
		\begin{aligned}
			& \det (\mathbf{J}_{\boldsymbol{\theta}\boldsymbol{\theta}})=\frac{k^4L_x^4L_y^4R_s^2\cos^2\alpha\cos^4\phi}{36\sigma_{\nu}^4}
		\end{aligned}			
	\end{equation}	
	 Subsequently, the CRLB for CAPA can be simplified into
	 \begin{align}\label{crlb_sing}
	 		&\mathrm{CRLB}_{\text{CAPA}}(\alpha) =\frac{6\sigma^2_{\nu}}{k^2R_s}\left(\frac{\sin^2\alpha\sin^2\phi}{L_xL_y^3\cos^2\alpha\cos^4\phi}\right.\nonumber\\
	 		&\qquad\qquad\qquad\qquad\qquad\left.+\frac{1}{L_x^3L_y\cos^2\alpha\cos^2\phi}\right), \\
	 		&\mathrm{CRLB}_{\text{CAPA}}(\phi) = \frac{6\sigma^2_{\nu}}{k^2R_s}\left(\frac{1}{L_xL_y^3\cos^2\phi}\right).
	 \end{align}

	\subsubsection{Spatially Discrete Arrays}
	Consider a uniform planar array and its array elements are uniformly distributed on the XOY plane along the X and Y-axes with half-wavelength spacing ($d_y=d_z=\lambda/2$). Assume that the observation space is filled with short dipoles of length $\l_r$ ($l_r\ll\lambda$), vertically oriented and placed on a square grid. There are a total of $P\times Q$ elements (short dipoles) deployed on a planar with an area of $L_y\times L_x$, with $P=\lfloor L_x/d_x \rfloor$ and $Q=\lfloor L_y/d_y\rfloor$. The centers of these elements are given by the following set of points:
	\begin{equation}
		\label{Eq:aper_spda}
		\mathcal{P}=\left\{(x,y,z)|x=0,y=k_xd_x,z=k_yd_y\right\},
	\end{equation}
	where $0\leq|k_x|\leq \lfloor L_x/d_x \rfloor$ and $0\leq|k_y|\leq \lfloor L_y/d_y \rfloor$ are both integers.	
	For the $(p,q)$-th receive antenna, the observed voltage $V_{p,q}$ can be obtained by integrating over the antenna length:
	\begin{equation}
		V_{p,q}(t)=\int_{l_r}x_z(\mathbf{p},t)dz\approx l_rx(\mathbf{p}_{p,q},t)+\nu_{p,q},
	\end{equation}
	where $\mathbf{p}_{p,q}\in\mathcal{P}$ is the center of the $(p,q)$-th entry, $\nu_{p,q}$ is the independent zero-mean Gaussian random variables, with variance $\sigma_\nu^2=l_r\sigma^2$.	
	Similarly, the FIM for conventional SPDA-based DOA estimation $\mathbf{J}_{\mathcal{M}}\in \mathbb{R}^{2\times 2}$ can be derived as follows:	
	\begin{align}\label{eq:fim12}
			&\left[\mathbf{J}_{\mathcal{M}}\right]_{i,j}= \nonumber\\
			& \frac{2}{\sigma^2_{\nu}}\sum_{t=1}^{T}\Re\left\{
			\sum_{p=1}^P\sum_{q=1}^{Q} \left(\frac{\partial a_{p,q}(\boldsymbol{\theta})}{\partial[\boldsymbol{\theta}]_i}s(t)\right)^*\left(\frac{\partial a_{p,q}(\boldsymbol{\theta})}{\partial[\boldsymbol{\theta}]_j}s(t)\right)
			\right\},
	\end{align}
	where $a_{p,q}(\boldsymbol{\theta})=e^{\mathrm{j} k \mathbf{p}_{p,q}^{\mathrm{T}} \mathbf{d}({\alpha}, {\phi})}$ is the  steering vector of the $(p,q)$-th array element.	
	Subsequently, $\mathbf{J}_{\mathcal{M}}$ can be further formulated as
	\begin{align}
			&[\mathbf{J}_\mathcal{M}]_{1,1}= \Gamma  \cos ^2 \alpha\cos ^2\phi\left(P^2-1\right),\\
			&[\mathbf{J}_\mathcal{M}]_{1,2}=[\mathbf{J}_\mathcal{M}]_{2,1}=-\Gamma \sin\alpha\cos\alpha\sin\phi\cos\phi(P^2-1),\\
			& [\mathbf{J}_\mathcal{M}]_{2,2}=\Gamma \left[\sin ^2 \alpha\sin^2\phi\left(P^2-1\right)+\cos ^2 \phi\left(Q^2-1\right)\right],
	\end{align}
	where $\Gamma=\frac{R_s\pi^2l_r^2PQ}{12\sigma^2_{\nu}}$.	
%	Then, we have 
%	\begin{equation}
%		\det (\mathbf{J}_\mathcal{M})=\Gamma^2 \sin^2\alpha\cos^2\alpha(P^2-1)(Q^2-1).
%	\end{equation}
	Then, the CRLB of SPDA-based DOA estimation is given by
	\begin{align}\label{Eq:crlbMimo}
			&\mathrm{CRLB}_{\text{SPDA}}(\alpha)=\frac{12\sigma^2_{\nu}}{R_s\pi^2l_r^2PQ}\left[ \frac{\sin^2\alpha\sin^2\phi}{\cos^2\alpha\cos^4\phi(Q^2-1)}\right.\nonumber\\
			&\qquad\qquad\qquad\qquad\qquad+\left. \frac{1}{\cos^2\alpha\cos^2\phi(P^2-1)}\right],\\
			&\mathrm{CRLB}_{\text{SPDA}}(\phi)=\frac{12\sigma^2_{\nu}}{R_s\pi^2l_r^2PQ}\left[ \frac{1}{\cos^2\phi(Q^2-1)}\right].
	\end{align}
	\begin{remark}\normalfont
		\label{rem::4}
		By using the approximations $P=2L_x/\lambda$ and $Q=2L_y/\lambda$ and comparing the CRLBs of CAPA and SPDA, we observe that CAPA reduces the CRLB by a factor of $\lambda^2/8l_r^2$ in approximation. Typically, when setting $l_r=\lambda/8\pi$, this corresponds to a CRLB reduction factor of approximately  $8\pi^2$. Hence, it can be concluded that CAPA-based DOA estimation significantly outperforms the one using traditional SPDA. 
	\end{remark}

\section{Numerical Results}\label{Sec::VNR}

	In this section, we present numerical results to evaluate the performance of the proposed CAPA-MUSIC algorithm and to analyze the behavior of CAPA-based DOA estimation systems. 
	
		\begin{table}[t]
		\renewcommand\arraystretch{1.2}
		\caption{Parameter settings.}
		\centering\label{running_time}
		\fontsize{7.5}{9.5}\selectfont  	
		\label{tab:pstt}
		\begin{tabular}{|p{0.9cm}|p{5cm}|p{1.35cm}|}
			\hline
			Notations&Definitions&Values\\ \hline \hline
			${L_x}$&The width of CAPA&$1$m\\ \hline
			$L_y$&The height of CAPA&$1$m\\ \hline
			$\eta_0$&The free-space impedance&$120\pi\Omega$ \\ \hline
			$\sigma^2$&The noise power& $10^{-3}\text{V}^2/\text{m}^2$\\ \hline
			$K$& The dimension of Legendre polynomial& $30$\\ \hline
			$\lambda$ & The wavelength & $0.1$m \\ \hline
			$T$& The number of snapshots & $ 2000$\\ \hline
		\end{tabular}
	\end{table}
	
	The simulation parameters are summarized in Table \ref{tab:pstt} and are used throughout this section unless otherwise specified. All integrals are computed using Gauss–Legendre quadrature with 30 sampling points. 	All simulations were conducted on a PC equipped with an Intel i7-13980HX 2.2GHz CPU and a 32GB RAM, and the algorithms are implemented using MATLAB 2023b.

	\subsection{Convergence and Complexity of CAPA-MUSIC}
	\begin{figure}[t]
		\centering
		\includegraphics[width=0.8\linewidth]{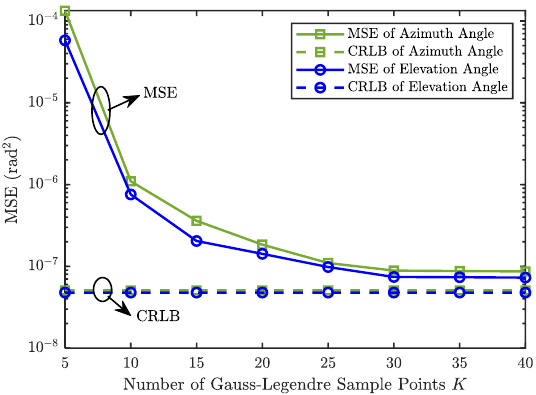}
		\caption{An illustration of the MSE and running time versus the number of Gauss-Legendre sample points $K$.}
		\label{fig:ValueKLGQ}
	\end{figure}
	
	Firstly, we evaluate the convergence performance of the proposed Gauss-Legendre numerical integration method for CAPA-MUSIC. The MSEs of DOA estimation with respect to different numbers of Gauss-Legendre sample points $K$ (i.e., the Gauss-Legendre dimension) are shown in Fig. \ref{fig:ValueKLGQ}. It can be clearly observed that as $K$ increases, the DOA estimation accuracy improves steadily. When $K$ exceeds 30, the MSEs for both azimuth and elevation angles converge to stable values that are close to the corresponding CRLB, thereby demonstrating the effectiveness of the proposed method. Therefore, selecting $K=30$ sample points is sufficient to ensure accurate numerical integration for CAPA-MUSIC, which validates the choice of $K$ in TABLE \ref{running_time}.

%	In the proposed CAPA-MUSIC algorithm, a higher Gauss-Legendre dimension contributes higher approximation accuracy for the integration in Eq. (\ref{Gauss_l}), while it inevitably involves heavier computational burden. In Fig. \ref{fig:ValueKLGQ}, we present the MSE of DOA estimation and the corresponding running time of our proposed algorithm versus $K$. The running time is averaged over 100 independent random experiments.  while also leading to longer running time. Notably, when $K\ge 20$, the MSE tends to steady while the running time sharply increases. Hence, selecting $K$ within the range $[15,20]$ can achieve a balance between estimation accuracy and computational efficiency, providing a practical guidance for choosing of $K$.

	\subsection{Performance with Single Target}

	To highlight the performance gap between the investigated CAPA and traditional SPDA systems, we present  the MSE of the proposed CRLB-MUSIC, the CRLB of  DOA estimation using CAPA, and the one using traditional SPDA in Fig. \ref{fig:msesigma}. The system model of SPDA has been elaborated in Section \ref{sec::sdap}, with the length of dipoles set as $l_r=\frac{\lambda}{4\pi}$.  The location of target is $\mathbf{z}_1=[-100,80,300]m$, the actual azimuth and elevation angles are $\alpha_1=141.34^\circ$ and $\phi_1=66.88^\circ$, respectively.

	\subsubsection{Performance Versus SNR}
	It is apparent from Fig. \ref{fig:msesigma} that the MSE and CRLB of DOA estimation gradually increase as the noise power increases. Moreover, the MSEs for both azimuth and elevation angles are close to the CRLB of CAPA-based DOA estimation, which is attributed to the proposed CAPA-MUSIC algorithm. This observation further validates the effectiveness of the proposed eigenvalue decomposition method for infinite matrices, as analyzed in \textbf{Remark \ref{rem:rem1}}. Additionally, compared to the conventional SPDA, the CAPA-based DOA estimation achieves an order-of-magnitude improvement in accuracy, which conforms to the theoretical results highlighted in \textbf{Remark \ref{rem::4}} and  validates the remarkable superiority of CAPA systems. Furthermore, since a symmetric CAPA array is employed in the simulation, the snapshot signals do not influence the CRLB of the DOA estimation (as elaborated in \textbf{Proposition \ref{prop:1}}), leading to the results in Fig. \ref{fig:msesigma}. In the remainder of the simulation, the legend entry ``CRLB of CAPA" denotes $\text{CRLB}_u$ unless otherwise specified.
	\begin{figure}[t]
		\centering
		\includegraphics[width=1\linewidth]{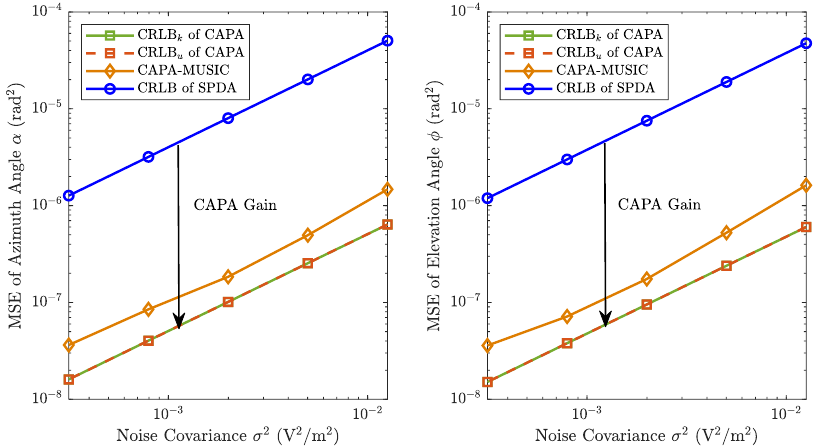}
		\caption{An illustration of the achieved MSE of CAPA-MUSIC along with CRLB.}
		\label{fig:msesigma}
	\end{figure}
	
	\subsubsection{Achieved CRLB over The DOA Plane}
	Fig. \ref{fig:crlb2d} shows the CRLB over the azimuth-elevation plane, where the achieved CRLB has been post-processed using the $\log_{10}$ function. It can be observed that:  1) When there is only one target in the scenario and the receive CAPA is square (i.e., $L_x=L_y$), the CRLB of elevation angle is independent of the target's azimuth. 2) The CRLB for azimuth estimation increases sharply as the elevation angle approaches $\pi/2$, because there is no phase difference among the receive elements across the CAPA when the target vector is perpendicular to the CAPA plane. 3) When the elevation angle approaches $0$ or $\pi$, the CRLB for azimuth estimation is optimal, while that for elevation estimation increases significantly.
	\begin{figure}[t]
		\centering
		\includegraphics[width=1\linewidth]{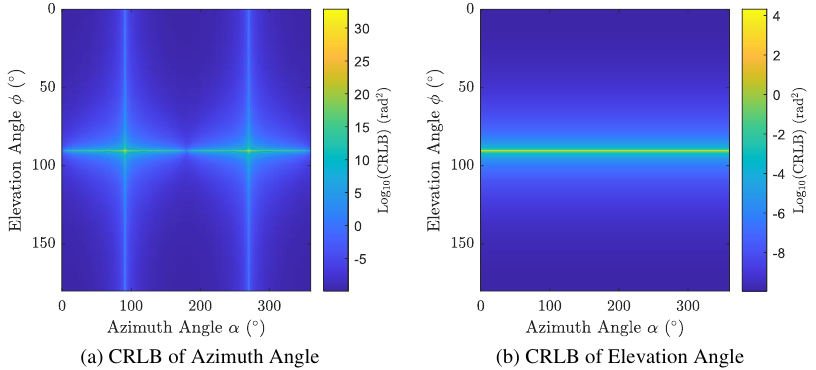}
		\caption{An illustration of achieved CRLB.}
		\label{fig:crlb2d}
	\end{figure}
	
	\subsubsection{Performance Versus Number of Snapshots}
	\begin{figure}[t]
		\centering
		\includegraphics[width=1\linewidth]{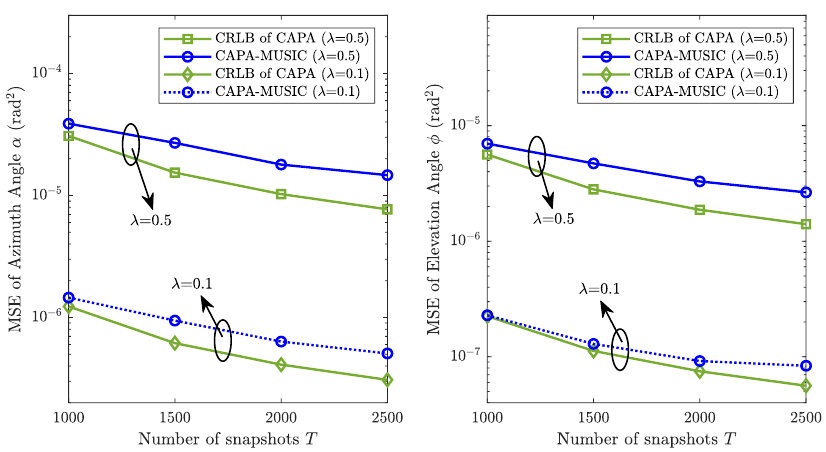}
		\caption{An illustration of achieved CRLB versus number of snapshots $T$ under different wavelength $\lambda$.}
		\label{fig:crlb_n}
	\end{figure}
	
	In Fig. \ref{fig:crlb_n}, we analyze the MSE of the proposed CAPA-MUSIC along with the corresponding CRLB as a function of number of snapshots $T$ under various wavelengths $\lambda$. From this figure, it can be observed that: 1) As the increase of $T$, both the DOA estimation MSE and the CRLB decrease, aligning well with the results in \textbf{Theorem \ref{the:the2}}. 2) As the wavelength increases, the MSE increases correspondingly, due to the fact that a shorter wavelength contributes to higher spatial resolutions. 3) Under all tested parameter settings, the proposed CAPA-MUSIC achieves performance that is very close to the CRLB for both azimuth and elevation angles, further validating its effectiveness and accuracy.

	\subsection{Performance with Multiple Targets}
	To further evaluate the CAPA-based DOA estimation performance, we present the results with multiple targets. The locations of targets are set as $\mathbf{z}_1=[50,-100,15]m$ and $\mathbf{z}_2=[200,50,15]m$.
	\subsubsection{Performance Versus SNR}
	\begin{figure}[t]
		\centering
		\includegraphics[width=1\linewidth]{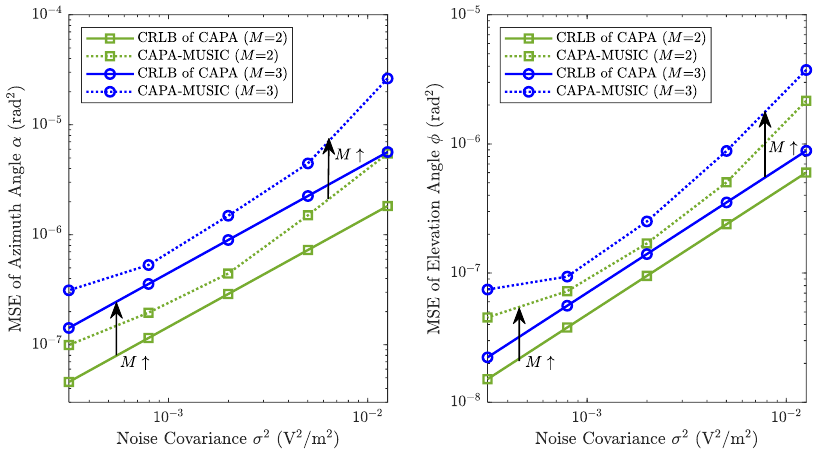}
		\caption{An illustration of achieved MSE with two targets, with the red stars representing the actual orientation of the target.}
		\label{fig:multi_targets}
	\end{figure}
	
	In Fig. \ref{fig:multi_targets}, we analyze the achieved CRLBs and MSEs as a function with regard to noise covariance $\sigma^2$. We evaluate estimation performance of both  azimuth and elevation angles under different numbers of targets $M$. For the case with $M=3$, the locations of targets are set as $\mathbf{z}_1=[50,-100,15]m$,  $\mathbf{z}_2=[200,50,15]m$, and $\mathbf{z}_3=[200,50,15]m$, respectively. It can be observed from this figure that: 1) As the number of targets to be sensed increases, both the CRLBs and MSEs of the proposed CAPA-MUSIC worsen, which is mainly attributed to the correlation between signals, especially when they are closely spaced. 2) The performance of CAPA-MUSIC is close to the corresponding CRLB in both cases, further validating the scalability of the proposed algorithm.

	\subsubsection{Performance Versus CAPA Size}
	
	\begin{figure*}[t]
		\centering
		\includegraphics[width=1\linewidth]{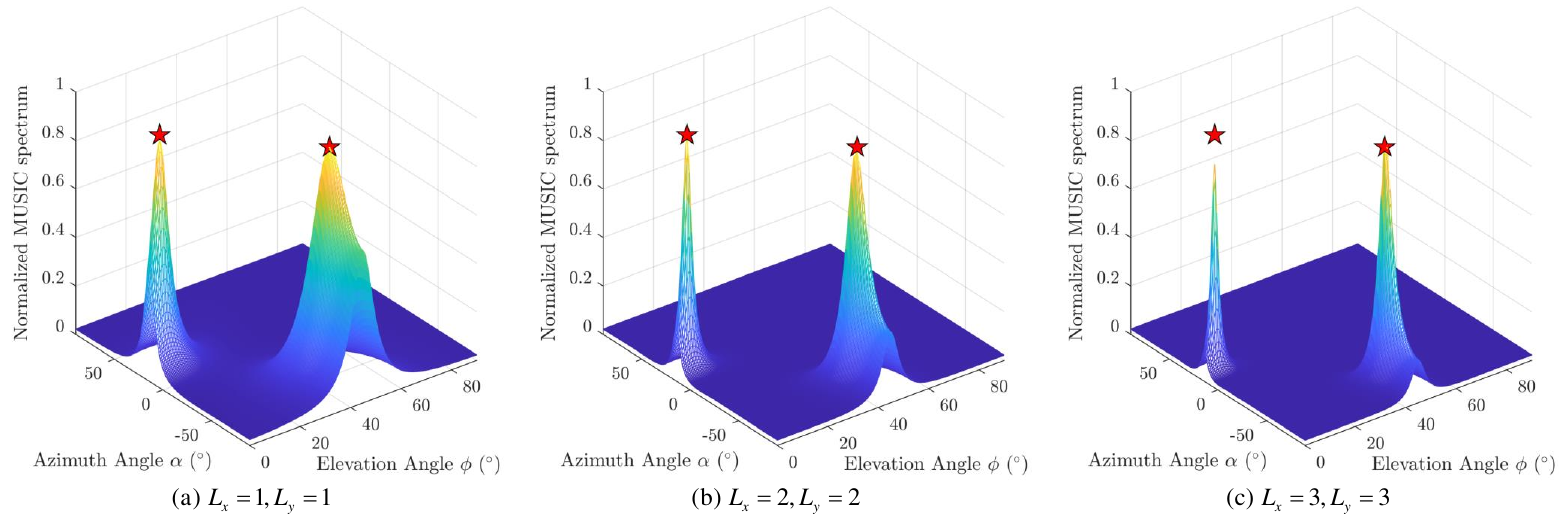}
		\caption{MUSIC spectrum for CAPA-based DOA estimation under different CAPA sizes.}
		\label{fig:spectrummusic}
	\end{figure*}
	
	The size of CAPA receiver also affects the DOA estimation accuracy. In Fig. \ref{fig:spectrummusic},  we present the MUSIC spectrum of CAPA-based DOA estimation with different CAPA sizes. Two targets are located at $[50,-100,15]$ and $[200,50,15]$, respectively. The actual azimuth and elevation angles of the targets are marked in the MUSIC spectrum with red stars. Apparently, peaks at the target point can be observed from the MUSIC spectrum under all situations. Moreover, as the CAPA size increases, the peak becomes sharper, indicating higher angle resolution and estimation accuracy. This result conforms to the theoretical results in Eq. (\ref{crlb_sing}) and further validates the effectiveness and scalability of the proposed CAPA-MUSIC algorithm.

	\section{Conclusion}\label{sec:cons}
	This paper proposed a MUSIC algorithm for CAPA-based DOA estimation and analyzed its fundamental performance limits in multiple cases. To tackle the problem of eigenvector decomposition for infinite-dimension matrices, we developed an equivalence transformation to convert the problem into the one for finite-dimension matrices. To avoid computationally intensive direct integration operations, we developed a Gauss-Legendre quadrature method to approximate the integral. Furthermore, the CRLBs for DOA estimation are derived under both known or unknown snapshot signals, validating the potential performance gain of CAPA systems compared to conventional SPDA systems. Numerical results validate the effectiveness of our proposed CAPA-MUSIC method and the potential of CAPA in DOA estimation and sensing. The designs and insights derived in this paper lay a solid foundation for future developments of CAPA-based sensing system.

	\begin{appendices}
		
	\section{Proof of Lemma \ref{lemma:1}}
\label{App:A}
	We consider the estimation of DOA parameters \(\boldsymbol{\theta} = [\alpha_1, \ldots, \alpha_M, \phi_1, \ldots, \phi_M]\) based on the received signal \(x(\mathbf{r}, t)\) over the continuous CAPA surface \(\mathcal{S}\). Define the squared residual function and its integral over the aperture as
	\begin{align}
			f(\mathbf{r}; \boldsymbol{\theta}) &= \left[ x(\mathbf{r},t) - \mathbf{a}^{\mathrm{T}}(\mathbf{r};\boldsymbol{\alpha},\boldsymbol{\phi})\mathbf{s}(t) \right]^2, \\
			F(\boldsymbol{\theta}) &= \int_{\mathcal{S}} f(\mathbf{r}; \boldsymbol{\theta})\, d\mathbf{r}.
	\end{align}	
	To prove the validity of differentiation under the integral sign (see \cite[Theorem 2.27]{folland1999real}), we verify the following conditions:
	
	\subsubsection{Parameter-Independent Domain}
	The integral domain \(\mathcal{S} \subset \mathbb{R}^2\) defined in (\ref{Eq:aper}) is fixed and compact, and does not depend on \(\boldsymbol{\theta}\), thereby satisfying the first requirement.
	
	\subsubsection{Uniform Dominated Convergence}
	The direction vector \(\mathbf{d}(\alpha_i,\phi_i)\) is continuously differentiable with respect to \(\alpha_i\) and \(\phi_i\), and its derivatives are bounded over compact subsets of the parameter domain. For any \(i \in \mathcal{M}\), the partial derivatives of \(a(\mathbf{r};\alpha_i,\phi_i)\) are computed as
	\begin{align}
			\frac{\partial a(\mathbf{r};\alpha_i,\phi_i)}{\partial\alpha_i}
			&= \mathrm{j} k \left(\mathbf{r}^{\mathrm{T}}\frac{\partial \mathbf{d}(\alpha_i,\phi_i)}{\partial\alpha_i}\right)
			e^{\mathrm{j} k\,\mathbf{r}^{\mathrm{T}}\mathbf{d}(\alpha_i,\phi_i)},\\
			\frac{\partial a(\mathbf{r};\alpha_i,\phi_i)}{\partial\phi_i}
			&= \mathrm{j} k \left(\mathbf{r}^{\mathrm{T}}\frac{\partial \mathbf{d}(\alpha_i,\phi_i)}{\partial\phi_i}\right)
			e^{\mathrm{j} k\,\mathbf{r}^{\mathrm{T}}\mathbf{d}(\alpha_i,\phi_i)}.
	\end{align}	
	Thus, there exists a constant \(C > 0\) such that
	\begin{equation}
		\left\| \frac{\partial \boldsymbol{a}(\mathbf{r}; \boldsymbol{\alpha},\boldsymbol{\phi})}{\partial [\boldsymbol{\theta}]_i} \right\|_2 \leq C(1 + \|\mathbf{r}\|_2).
	\end{equation}	
	Subsequently, we bound the derivative of \(f(\mathbf{r}; \boldsymbol{\theta})\) with respect to \([\boldsymbol{\theta}]_i\) as
	\begin{align}
			\left| \frac{\partial f(\mathbf{r}; \boldsymbol{\theta})}{\partial [\boldsymbol{\theta}]_i} \right|
			\leq 2& \left| x(\mathbf{r},t) -  \mathbf{a}^{\mathrm{T}}(\mathbf{r};\boldsymbol{\alpha},\boldsymbol{\phi}) \mathbf{s}(t) \right|\nonumber\\
			 &\left\| \frac{\partial \mathbf{a}(\mathbf{r};\boldsymbol{\alpha},\boldsymbol{\phi})}{\partial [\boldsymbol{\theta}]_i} \right\|_2 \left\|\mathbf{s}(t)\right\|_2.
	\end{align}	
	Since \(x(\mathbf{r},t)\) and \(\mathbf{s}(t)\) are bounded functions, and \(\mathcal{S}\) is compact, there exists a function \(g(\mathbf{r})=C(1+\|\mathbf{r}\|^2)\) satisfying that
	\begin{equation}
		\left| \frac{\partial f(\mathbf{r}; \boldsymbol{\theta})}{\partial [\boldsymbol{\theta}]_i} \right| \leq g(\mathbf{r}), \quad \forall \mathbf{r} \in \mathcal{S}.
	\end{equation}
	
	\subsubsection{Dominated Convergence Justification}
	Consider the difference quotient:
	\begin{equation}
		\left| \frac{f(\mathbf{r}; \boldsymbol{\theta} + \Delta\boldsymbol{\theta}) - f(\mathbf{r}; \boldsymbol{\theta})}{\Delta\boldsymbol{\theta}} \right| \leq 2 \sup_{[\boldsymbol{\theta}]_i} \left| \frac{\partial f(\mathbf{r};\boldsymbol{\theta})}{\partial [\boldsymbol{\theta}]_i} \right| \leq g(\mathbf{r}).
	\end{equation}
	We verify the integrability of \(g(\mathbf{r})\) over \(\mathcal{S}\):
	\begin{align}
			&\int_{\mathcal{S}} g(\mathbf{r}) d\mathbf{r} \nonumber\\
			=& C \int_{-L_y/2}^{L_y/2}\int_{-L_x/2}^{L_x/2} \left(r_x^2 + r_y^2\right)\, dr_x\, dr_y+CL_xL_y\nonumber \\
			=& \frac{CL_x^3 L_y + CL_x L_y^3}{12}+CL_xL_y < \infty.
	\end{align}
	Thus, \(g(\mathbf{r})\) is integrable over \(\mathcal{S}\).	By the dominated convergence theorem \cite[Theorem 2.27]{folland1999real}, we can interchange differentiation and integration, leading to
	\begin{equation}
		\frac{\partial}{\partial [\boldsymbol{\theta}]_i} \left( \int_{\mathcal{S}} f(\mathbf{r}; \boldsymbol{\theta})\, d\mathbf{r} \right) = \int_{\mathcal{S}} \frac{\partial f(\mathbf{r}; \boldsymbol{\theta})}{\partial [\boldsymbol{\theta}]_i}\, d\mathbf{r},
	\end{equation}
	which completes the proof of \textbf{Lemma \ref{lemma:1}}.
\end{appendices}

	\bibliography{literature}
\bibliographystyle{IEEEtranSHN}

\end{document}